\newtheorem{theorem}{Theorem}
\newtheorem{definition}{Definition}
\newtheorem{discussion}{Discussion}
\newtheorem{claim}{Claim}
\newtheorem{lemma}{Lemma}
\newtheorem{corollary}{Corollary}
\newtheorem{remark}{Remark}
\begin{document}
\title{On Marton's Inner Bound for the General Broadcast Channel}
\author{
\authorblockN{Amin Aminzadeh Gohari}
\authorblockA{EE Department\\
Sharif University of Technology\\
Tehran, Iran\\
Email: aminzadeh@sharif.edu} \and
\authorblockN{Abbas El Gamal}
\authorblockA{EE Department\\
Stanford University\\
Stanford, CA 94305, USA\\
Email: abbas@ee.stanford.edu}
 \and
\authorblockN{Venkat Anantharam}
\authorblockA{EECS Department\\
University of California\\
Berkeley, CA 94720, USA\\
Email: ananth@eecs.berkeley.edu}}
\maketitle

\begin{abstract}
We establish several new results on Marton's coding scheme and its corresponding inner bound on the capacity region of the general broadcast channel. We show
that unlike the Gaussian case, Marton's coding scheme without superposition coding is not optimal in general even for a degraded broadcast channel with no common message. We then establish properties of Marton's inner bound that help restrict the search space for computing the sum-rate. Next, we show that the inner bound is optimal along certain directions. Finally, we propose a coding scheme that may lead to a larger inner bound.
\end{abstract}

\section{Introduction}
In this paper, we consider the general two-receiver broadcast channel with an input alphabet ${\cal X}$, output alphabets ${\cal Y}$ and ${\cal Z}$, and conditional probability distribution function $q(y,z|x)$. The capacity region of this channel is defined as the set
of rate triples $(R_0, R_1, R_2)$ such that the sender $X$ can reliably communicate a common message at rate $R_0$ to both receivers and two private messages at rates $R_1$ and $R_2$ to receivers $Y$ and $Z$ respectively, see~\cite{CoverThomas} or \cite{CsiszárKörner}.
The capacity region of this channel is known for several special cases but unknown in general. The best known general inner bound to the capacity region is due to Marton \cite{Marton}\cite{LiangKramerPoor}.

In this paper, we study Marton's inner bound. Marton's inner bound for a general two-receiver discrete-memoryless broadcast channel is as follows:

{\em Marton's Inner bound \cite{Marton}\cite{CsiszárKörner}\cite{GelfandPinsker}\cite{LiangKramerPoor}}:
The union of non-negative rate triples $(R_0, R_1, R_2)$ satisfying the inequalities
  \begin{align}
    R_0+R_1&\leq I(UW;Y),\label{eqn:DefOfMKG2}\\
    R_0+R_2&\leq I(VW;Z),\label{eqn:DefOfMKG3}\\
    R_0+R_1+R_2&\leq
I(UW;Y)+I(V;Z|W)\nonumber\\&\qquad -I(U;V|W),\label{eqn:DefOfMKG4}\\
    R_0+R_1+R_2&\leq
I(U;Y|W)+I(VW;Z)\nonumber\\&\qquad-I(U;V|W),\label{eqn:DefOfMKG5}\\
    2R_0+R_1+R_2&\leq
I(UW;Y)+I(VW;Z)\nonumber\\&\qquad -I(U;V|W),\label{eqn:DefOfMKG6}
  \end{align}
for some random variables $(U,V,W,X,Y,Z)\sim p(u,v,w,x)q(y,z|x)$ constitutes an inner bound to the capacity region. Further to compute this region it suffices to consider $|U|, |V| \leq |X|, |W| \leq |X| + 4$ and assume that $X$ is a deterministic function of $(U, V, W)$ \cite{EvaluationMarton}.

In this paper we prove various results related to this inner bound.

\emph{Insufficiency of Marton's coding scheme without a superposition
variable:} Random variable $W$ corresponds to the ``superposition-coding" aspect of the bound, and the random variables $U$ and $V$ correspond to the ``Marton-coding" aspect of the bound. Necessity of the ``superposition-coding" aspect of the inner bound had previously been observed for a non-degraded broadcast channel \cite{JogNair}. For degraded channels, it is known that $W$ is unnecessary for achieving the capacity region of Gaussian broadcast channels (through dirty paper coding) \cite{WeingartenSteinbergShamai}. We show that, unlike in the Gaussian broadcast channel case, ``Marton's coding scheme" alone is not sufficient to achieve the capacity region of the general degraded broadcast channel.

\emph{Computing the sum rate in Marton's inner bound:} Given a broadcast channel $q(y,z|x)$ the maximum sum-rate achievable via Marton's strategy is given by
 \begin{align}
\max_{p(u,v,w,x)}& \min \{I(W;Y), I(W;Z) \} +\nonumber\\& I(U;Y|W) + I(V;Z|W) - I(U;V|W).
\label{eq:mib1l}
\end{align}
Further it suffices to consider $|U|, |V | \leq |X|, |W| \leq |X|+1$ and assume that $X$ is a deterministic function of $(U, V, W)$ \cite{EvaluationMarton}.
Note that $\min\{I(W;Y), I(W;Z) \}$ depends only on $p(w,x)$. The last three terms $I(U;Y|W) + I(V;Z|W) - I(U;V|W)$ can be written as $$\sum_{w}p(w)\big(I(U;Y|W=w) + I(V;Z|W=w) - I(U;V|W=w)\big).$$ Let us write the above optimization as follows:
 \begin{align}
\max_{p(w,x)}\bigg[& \min \{I(W;Y), I(W;Z) \} +\nonumber\\& \sum_{w}p(w)\max_{p(u,v|w,x)}\big[I(U;Y|W=w) + I(V;Z|W=w)\nonumber\\&~~~~~~~~~~~~~~~~~~~~~~~ - I(U;V|W=w)\big]\bigg].\nonumber
\end{align}
One can think of the maximization in the following way
 \begin{align*}
&\max_{p(w,x)}\min \{I(W;Y), I(W;Z) \} +\sum_{w}p(w)T(p(x|w))
\end{align*}
where $T(p(x))$ is the maximum of $I(U;Y)+ I(V;Z)-I(U;V)$ over all $p(u,v|x)$ where $X\sim p(x)$ and $X$ is a deterministic function of $(U, V)$.

It is shown in \cite{NairWangGeng} that the latter maximization problem concerning $T(p(x))$ has a remarkable solution for all binary input broadcast channels: it suffices to take $U=X$ and $V=constant$, or $V=X$ and $U=constant$. In other words, for all binary input broadcast channels we have
 \begin{align}I(U;Y) + I(V;Z) - I(U;V) \leq \max\{I(X;Y), I(X;Z)\}.\label{eq:binaryinequality}
\end{align}
To prove this, authors of \cite{NairWangGeng} consider different mappings from $\mathcal{U}\times \mathcal{V} \mapsto \mathcal{X}$. Because of the cardinality bound of two on $\mathcal{U}$ and $\mathcal{V}$, the authors argue that the XOR mapping (i.e. $X=U\oplus V\mod 2$) and the AND mapping (i.e. $X=U\wedge V$) cannot occur in any maximizer of $I(U;Y)+I(V;Z)-I(U;V)$.

We believe that finding the correct extension of equation (\ref{eq:binaryinequality}) to larger alphabets can be useful in (a) computing Marton's inner bound efficiently for a given channel, and (b) comparing the Marton inner bound with its multi-letter characterizations to see if Marton's inner bound is optimal or not (see \cite{ITApaper} for a discussion of this line of attack).

One of the main results of this part is to generalize to larger alphabets the statement that the XOR mapping cannot occur. We show that one cannot find distinct $u_0$, $u_1$ in $\mathcal{U}$, distinct $v_0$, $v_1$ in $\mathcal{V}$ and distinct $x_0$, $x_1$ in $\mathcal{X}$ such that $p(x_0|u_0,v_0)=p(x_0|u_1,v_1)=p(x_1|u_1,v_0)=p(x_1|u_0,v_1)=1$.
%

\emph{Optimality of Marton's inner bound along certain directions:} We compute the maximum of $\lambda_0R_0+\lambda_1R_1+\lambda_2R_2$ over all $(R_0, R_1, R_2)$ in the capacity region where $\lambda_0$, $\lambda_1$ and $\lambda_2$ are real numbers such that $\lambda_0\ge
\lambda_1+\lambda_2$. We observe that Marton's inner bound is tight along these directions.

\emph{An achievable region:} Since capacity is defined in the limit of large block length, it is natural to expect that optimal coding schemes have an invariant structure with respect
to shifts in time. This suggests that capacity should be expressed via a formula that has a \emph{fixed-point character}, namely it should involve joint distributions that are invariant under a time shift. Following this general idea, we propose a new inner bound for the capacity region. We don't know if the proposed inner bound is strictly better than Marton's inner bound.

The rest of the paper is organized as follows.
Section \ref{Section:MainResults}
contains the main results of the paper, and section
\ref{Section:Proofs} contains the proofs of these results, with
some of the details relegated to the appendices.

\section{Main}\label{Section:MainResults}
Let $\mathcal{C}(q(y, z|x))$ denote the capacity region of the broadcast
channel $q(y,z|x)$, and $\mathcal{C}_{M}(q(y, z|x))$ denote Marton's inner bound for the
channel $q(y,z|x)$, defined in the introduction by equations (\ref{eqn:DefOfMKG2})-(\ref{eqn:DefOfMKG6}). The notation $X^{i}$ is used to denote the vector $(X_{1}, X_{2},...,
X_{i})$, and $X_{i}^{n}$ to denote $(X_{i}, X_{i+1},..., X_{n})$.

\subsection{Insufficiency of Marton's coding scheme
without a superposition variable}

In Marton's inner bound the auxiliary random variable $W$ corresponds to
the ``superposition-coding" aspect of the bound, and the random
variables $U$ and $V$ correspond to the ``Marton-coding" aspect of
the bound. When $R_0=0$ (private messages only) and $W=\emptyset$, Marton's inner bound reduces to
the the set of
non-negative rate pairs $(R_1,R_2)$ satisfying
  \begin{align}
    R_1&\leq I(U;Y|Q),\label{eqn:MartonWithoutW1}\\
    R_2&\leq I(V;Z|Q),\label{eqn:MartonWithoutW2}\\
    R_1+R_2&\leq
I(U;Y|Q)+I(V;Z|Q)-I(U;V|Q),\label{eqn:MartonWithoutW3}
  \end{align}
  for some random variables $(Q,U,V,X,Y,Z)\sim p(q)p(u,v,x|q)q(y,z|x)$.

It is known that this inner bound is
tight for Gaussian broadcast channels (through dirty paper coding), implying that $W$ is unnecessary for achieving the capacity region of this class of degraded broadcast channels \cite{WeingartenSteinbergShamai}. We show through an example that this is not the case in general.
\begin{claim}\label{Obs:Obs1} There are degraded broadcast channels for which
Marton's private message inner bound without $W$ is strictly contained in the capacity region of the channel (which is known to equal the Marton region with superposition variable in the case of degraded channels).
\end{claim}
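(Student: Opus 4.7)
The plan is to exhibit an explicit degraded broadcast channel for which Marton's inner bound without $W$ is strictly contained in the capacity region. A convenient candidate is the binary-symmetric cascade $Y = X \oplus N_1$, $Z = Y \oplus N_2$, where $N_1, N_2$ are independent Bernoulli random variables with crossover probabilities $\alpha_1, \alpha_2 \in (0, 1/2)$. Because the channel is degraded, its capacity region equals the Marton region with $W$, and Bergmans' formula with uniform $X$ and $W = X \oplus T$, $T \sim \mathrm{Bern}(\tau)$, gives the boundary
\[
R_1^*(\tau) = H(\tau * \alpha_1) - H(\alpha_1), \quad R_2^*(\tau) = 1 - H(\tau * \alpha_1 * \alpha_2), \quad \tau \in [0, 1/2].
\]
This curve is strictly concave in the $(R_1, R_2)$-plane for $\alpha_1, \alpha_2 \in (0, 1/2)$, hence strictly above the chord joining its endpoints $A = (1 - H(\alpha_1), 0)$ and $B = (0, 1 - H(\alpha_1 * \alpha_2))$.

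To upper bound Marton's inner bound without $W$, I would first observe that by symmetry of the channel the capacity-achieving $X$ is uniform. By the Nair--Wang--Geng inequality (\ref{eq:binaryinequality}) and the analysis of \cite{NairWangGeng}, the XOR and AND maps are ruled out from any sum-rate maximizer in the binary setting, so up to swapping the roles of $U$ and $V$ it suffices to consider $U = X$ and $V = X \oplus M$ with $M \sim \mathrm{Bern}(m)$, $m \in [0, 1/2]$, together with time-sharing $Q$. For such $(U, V)$ the Marton pentagon has vertex
\[
\bigl( H(m) - H(\alpha_1), \; 1 - H(m * \alpha_1 * \alpha_2) \bigr),
\]
which a direct computation places strictly below the chord $AB$ for every $m \in (0, 1/2)$. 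Convex-hulling these pentagons with the endpoint strategies that achieve $A$ (by taking $V$ constant) and $B$ (by taking $U$ constant) produces a region whose upper boundary is exactly the chord $AB$.

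Combining the two steps establishes the claim: every Bergmans corner $(R_1^*(\tau), R_2^*(\tau))$ with $\tau \in (0, 1/2)$ is in the capacity region but strictly above the chord $AB$, hence strictly outside Marton's inner bound without $W$. For a concrete numerical instance, $(\alpha_1, \alpha_2) = (0.1, 0.4)$ and $\tau = 1/4$ give $(R_1^*, R_2^*) \approx (0.413, 0.005)$, while the chord at $R_2 = R_2^*$ attains only $R_1 \approx 0.402$, certifying that this rate pair lies outside Marton-without-$W$.

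The main obstacle is the reduction of arbitrary binary $(U, V)$ with $|U|, |V| \le 2$ to the symmetric BSC form $V = X \oplus M$. While Nair--Wang--Geng directly handles the sum-rate maximizer, extending its consequence to every boundary direction of Marton-without-$W$ requires a finite case analysis over the mappings $\mathcal{U} \times \mathcal{V} \to \mathcal{X}$, exploiting the symmetry of the cascade to eliminate non-symmetric candidates. Once this reduction is in place, the strict concavity of the Bergmans curve together with the pentagon-below-chord calculation delivers the strict separation asserted in the claim.
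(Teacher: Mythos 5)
Your overall strategy (pick a degraded binary-input channel, show that in a weighted direction $R_1+\alpha R_2$ with $\alpha>1$ the Marton-without-$W$ region falls strictly short of the superposition/Bergmans boundary) is the same as the paper's, which uses a BSC(0.3) cascaded with an asymmetric binary channel, exhibits an explicit $p(v,x)$ giving $I(X;Y|V)+2.4\,I(V;Z)\approx 0.1229$, and then certifies by an exhaustive numerical maximization over the cardinality-bounded domain ($|\mathcal{U}|,|\mathcal{V}|\le 2$, $X$ a deterministic function of $(U,V)$) that $\max\,[I(U;Y)+2.4\,I(V;Z)-I(U;V)]\approx 0.1215<0.1229$. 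The gap in your write-up is precisely the step you flag as ``the main obstacle'' and then do not carry out: the upper bound on the Marton-without-$W$ region. The Nair--Wang--Geng inequality (\ref{eq:binaryinequality}) and their elimination of the XOR/AND maps apply only to the \emph{equal-weight} expression $I(U;Y)+I(V;Z)-I(U;V)$, i.e.\ to the direction $(1,1)$, where both regions trivially meet at the point $A$; it gives no structural information about maximizers of $I(U;Y)+\alpha I(V;Z)-I(U;V)$ for $\alpha>1$, which is the support function you actually need to bound. Hence the reduction ``up to swapping the roles of $U$ and $V$ it suffices to consider $U=X$, $V=X\oplus M$'' is unproven, and without it the assertion that the convex hull of your pentagons has the chord $AB$ as its upper boundary is unsubstantiated: a priori some other deterministic map (e.g.\ an AND-type map) or a non-uniform input could produce a corner above the chord. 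Your appeal to symmetry to fix $X$ uniform is also not justified for this problem, since the weighted objective is not concave in the joint distribution and the bit-flip symmetry only permutes achieving distributions without forcing uniformity of an optimizer.

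So as written the proposal establishes only that one particular family of Marton-without-$W$ strategies lies below the Bergmans curve, which does not prove the claim. To close it you would either have to carry out the finite case analysis over all maps $\mathcal{U}\times\mathcal{V}\to\mathcal{X}$ for the weighted objective (handling non-uniform $p(x)$ as well), or do what the paper does: fix one weight $\alpha$, use the perturbation/cardinality results of \cite{EvaluationMarton} to make the search space compact, and certify the strict inequality by a global numerical maximization for a concrete channel. Your BSC--BSC cascade may well work as the example, but the strict separation for it is exactly the part that still needs a proof or a certified computation.
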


\subsection{Computing the sum-rate for Marton's Inner Bound}
\subsubsection{Extensions of the binary inequality}
In this subsection we are concerned with the following maximization problem that is tightly related to the calculation of the sum rate for Marton's inner bound: given $p(x)$, maximize $I(U;Y)+I(V;Z)-I(U;V)$ over all $p(u,v|x)$ where $X$ is a function of $(U,V)$.

To state our main result we need the following two definitions:
\begin{definition}\label{Definition4} The input symbols $x_0$ and $x_1$ are said to be \emph{indistinguishable} by the channel if $q(y|x_0)=q(y|x_1)$ for all $y$, and $q(z|x_0)=q(z|x_1)$ for all $z$. A channel $q(y,z|x)$ is said to be \emph{irreducible} if no two of its inputs symbols are indistinguishable by the channel.
\end{definition}

\begin{definition}\label{Definition5} Let $\mathcal{U}=\{u_1, u_2,..., u_{|\mathcal{U}|}\}$, $\mathcal{V}=\{v_1, ..., v_{|\mathcal{V}|}\}$ be finite sets, and $\xi$ be a deterministic mapping from $\mathcal{U}\times \mathcal{V}$ to $\mathcal{X}$. One can represent the mapping by a table having $|\mathcal{U}|$ rows and $|\mathcal{V}|$ columns; the rows are indexed by $u_1, u_2, ...., u_{|\mathcal{U}|}$ and the columns are indexed by $v_1, v_2,..., v_{|\mathcal{V}|}$. In the cell $(i,j)$, we write $\xi(u_i, v_j)$, for the symbol $x$ that $(u_i, v_j)$ is being mapped to. The profile of the $i^{th}$ row is defined to be a vector of size $|\mathcal{X}|$ counting the number of occurrences of the elements of $\mathcal{X}$ in the $i^{th}$ row. In other words if $\mathcal{X}=\{x_1, x_2, ...,x_{|\mathcal{X}|}\}$, the $k^{th}$ element of the profile of the $i^{th}$ row is the number of times that $x_k$ shows up in the $i^{th}$ row of the table. The profile of the $j^{th}$ column is defined similarly. Define the profile of the table to be a vector of size $(|\mathcal{U}|+|\mathcal{V}|)|\mathcal{X}|$ formed by concatenating the profile vectors of the rows and the columns of the table. The profile vector of the mapping $\xi$ is denoted by $\overrightarrow{v_{\xi}}$.
\end{definition}

We now state the main result of this subsection.
\begin{theorem}\label{Thm:EB1} Take an arbitrary irreducible broadcast channel $q(y,z|x)$ where $q(y|x)>0, q(z|x)>0$ for all $x,y,z$. Fix some $p(x)$. Take any $p(u,v|x)$ maximizing $I(U;Y)+I(V;Z)-I(U;V)$ where $X$ is a function of $(U,V)$. Without loss of generality assume that $p(u)>0$ for all $u\in \mathcal{U}$, and $p(v)>0$ for all $v\in \mathcal{V}$. Let $x=\xi(u,v)$ denote the deterministic mapping from $\mathcal{U}\times \mathcal{V}$ to $\mathcal{X}$. Then all of the following conditions must hold:
\begin{itemize}
  \item $p(u,v)>0$, $p(u,y)>0$, and $p(v,z)>0$ for all $u, v, y$ and $z$.
  \item The profile vector of the mapping $\xi$, $\overrightarrow{v_{\xi}}$, cannot be written as
$$\sum_{t=1}^{M}\alpha_t\overrightarrow{v_{\xi_t}},$$
where $\xi_t$ (for $t=1,2,3,...,M$) are deterministic mappings from $\mathcal{U}\times \mathcal{V}$ to $\mathcal{X}$ not equal to $\xi$, and $\alpha_t$ are non-negative numbers adding up to one, i.e. $\sum_{t=1}^{M}\alpha_t=1$.
  \item Let the functions \begin{eqnarray*}&f_{u}:\mathcal{X}\rightarrow \mathbb{R}\mbox{ for every }u\in \mathcal{U},\\&g_{v}:\mathcal{X}\rightarrow \mathbb{R}\mbox{ for every }v\in \mathcal{V},\\&
\mbox{ and }h:\mathcal{X}\rightarrow \mathbb{R},\end{eqnarray*} be defined by
\begin{eqnarray*}&f_{u}(x)= \sum_{y}q(y|x)\log p(u,y),
\\&
g_{v}(x)=\sum_{z}q(z|x)\log p(v,z),
\end{eqnarray*}
$$h(x)=\min_{u'\in \mathcal{U},v'\in \mathcal{V}}\bigg(\log (p(u',v'))$$$$-f_{u'}(x)-g_{v'}(x)\bigg).$$
These definitions make sense because of the first bullet of this theorem. Then, for any $u$ and $v$, the following two equations hold:
    \begin{eqnarray*}&\log(p(u,v))=\max_x [f_{u}(x)+g_{v}(x)+h(x)],\end{eqnarray*}
    and
  \begin{eqnarray*}&p(x_0|u,v)=1 \mbox{ for some } x_0\in \mathcal{X} \Rightarrow \\& x_0\in argmax_x [f_{u}(x)+g_{v}(x)+h(x)].\end{eqnarray*}
\end{itemize}

\end{theorem}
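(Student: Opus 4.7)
The plan is to formulate the maximization as a constrained optimization and apply KKT stationarity. I allow a priori randomized mappings $\tilde p(x\mid u,v)$ subject to $\sum_{u,v}p(u,v)\tilde p(x\mid u,v)=p(x)$; since $F:=I(U;Y)+I(V;Z)-I(U;V)$ is convex in $\tilde p(x\mid u,v)$ for fixed $p(u,v)$ (the first two terms are convex in the conditional, the third is constant in it), the supremum is attained at a deterministic $\xi$, consistent with the hypothesis. A direct computation gives $\partial F/\partial p(u,v,x) = f_u(x)+g_v(x)-\log p(u,v)+C(x)$, where $C(x)$ collects terms depending only on $x$ (through $p(y)$ and $p(z)$). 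Introducing Lagrange multipliers $\mu(x)$ for the marginal constraint, complementary slackness yields
\begin{equation*}
\log p(u,v)\;\geq\;f_u(x)+g_v(x)+h(x),\qquad h(x):=C(x)-\mu(x),
\end{equation*}
with equality precisely when $p(u,v,x)>0$, i.e.\ at $x=\xi(u,v)$. This proves the third bullet: the max formula $\log p(u,v)=\max_x[f_u(x)+g_v(x)+h(x)]$ with argmax $\xi(u,v)$ is immediate, and the identification $h(x)=\min_{u',v'}[\log p(u',v')-f_{u'}(x)-g_{v'}(x)]$ follows because the minimum is attained by any $(u',v')$ with $\xi(u',v')=x$.

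For the first bullet, suppose for contradiction $p(u_0,v_0)=0$, with $p(u_0),p(v_0)>0$ by the WLOG assumption. Pick any $x_0$ and any $(u_1,v_1)$ with $\xi(u_1,v_1)=x_0$ and $p(u_1,v_1)>0$, and move a mass $\epsilon>0$ from $(u_1,v_1,x_0)$ to $(u_0,v_0,x_0)$, preserving $p(x_0)$. The dominant change in $F$ is the $-I(U;V)$ contribution of the newly opened cell $(u_0,v_0)$, which is $-\epsilon\log\epsilon+O(\epsilon)$, strictly positive for small $\epsilon$, contradicting optimality. Once $p(u,v)>0$ is established, the positivity of $p(u,y)=\sum_v p(u,v)q(y|\xi(u,v))$ and of $p(v,z)$ is automatic from $q(y|x),q(z|x)>0$.

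For the second bullet, the most delicate part, suppose $\overrightarrow{v_\xi}=\sum_t\alpha_t\overrightarrow{v_{\xi_t}}$ with $\xi_t\neq\xi$. Row-profile matching implies that, for any $k:\mathcal{X}\to\mathbb{R}$ and any row $u$, $\sum_v k(\xi(u,v))=\sum_t\alpha_t\sum_v k(\xi_t(u,v))$; a symmetric identity holds for columns. Applying this with $k=f_u+h$ inside a row sum, combined with the KKT equality from the third bullet summed over $v$ and with the KKT inequality $\log p(u,v)\geq f_u(x)+g_v(x)+h(x)$ evaluated at $x=\xi_t(u,v)$, and then summing over $u$ with the column-profile identity applied to $k=g_v$, yields a chain of inequalities whose aggregate is forced to be equality by the profile matching. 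This collapses every KKT inequality along the way to a tight one: every $\xi_t(u,v)$ lies in $\operatorname{argmax}_x[f_u(x)+g_v(x)+h(x)]$. The final and most subtle sub-step is to rule out argmax ties at the optimum: combining strict positivity of $p(u,v)$ (first bullet), irreducibility of the channel, and a further perturbation should force uniqueness of the argmax, giving $\xi_t=\xi$ and the desired contradiction. The main obstacle is precisely this uniqueness-of-argmax step, where row and column averages must be combined into a pointwise equality via the additive separable structure $f_u+g_v+h$; this is the core technical difficulty of the theorem.
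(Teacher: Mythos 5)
Your arguments for the first and third bullets are essentially the paper's own: the first bullet is the same infinite-derivative-of-$H(UV)$ perturbation (moving $\epsilon$ of mass into the empty cell while keeping it on the same $x$, with finiteness of the $H(UY)$, $H(VZ)$ derivatives guaranteed by channel positivity), and the third bullet is the same first-order argument comparing directional derivatives of $H(UV)-H(UY)-H(VZ)$ along $X$-marginal-preserving exchanges; your Lagrange-multiplier phrasing only repackages this (with the minor caveat that identifying your multiplier-based $h$ with the stated $\min$-based $h$ needs every $x$ to carry positive mass, an issue the paper sidesteps by defining $h$ as the minimum from the start and proving the two inequalities separately).

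The second bullet, however, has a genuine gap, and you have located it yourself. Your profile-identity argument (summing the tight equalities $\log p(u,v)=f_u(\xi(u,v))+g_v(\xi(u,v))+h(\xi(u,v))$ and using the row identity with $k=f_u+h$ and the column identity with $k=g_v$) does validly force $\xi_t(u,v)\in\operatorname{argmax}_x[f_u(x)+g_v(x)+h(x)]$ for every $t$ with $\alpha_t>0$; but from there you need uniqueness of the argmax to conclude $\xi_t=\xi$, and no such uniqueness is claimed by the theorem, follows from the KKT conditions, or is plausibly true in general (ties are perfectly consistent with stationarity, and the statement to be proved is about profiles, not about argmax singletons). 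The paper does not complete this route; it proves the second bullet by a non-local construction that your first-order framework cannot see: assuming $\overrightarrow{v_{\xi}}=\sum_t\alpha_t\overrightarrow{v_{\xi_t}}$, it introduces independent randomizers $T_{i,j}$ that, with total probability $\epsilon$, switch the cell $(u_i,v_j)$ from $\xi$ to $\xi_t$ with weight $\alpha_t$. The row/column profile identities guarantee that the perturbed $\widetilde X$ has exactly the same conditionals $p(\widetilde X=x|U=u_i)$ and $p(\widetilde X=x|V=v_j)$ as before, so taking the enlarged auxiliary $\widetilde U=(U,T_{i,j})$, $\widetilde V=V$ changes the objective by exactly $I(T_{i,j};\widetilde Y|U)$, which optimality forces to be zero; then channel positivity and irreducibility (some $y$ with $q(y|x_0)\neq q(y|x_1)$, where $x_0=\xi(u_i,v_j)\neq x_1=\xi_1(u_i,v_j)$) show the conditional law of $\widetilde Y$ given $U=u_i$ genuinely depends on $T_{i,j}$, a contradiction. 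Note this argument enlarges the auxiliary alphabet and uses an exact information identity rather than a stationarity condition, which is precisely the extra idea missing from your proposal; without it (or some substitute for the "rule out ties" step), your proof of the second bullet does not go through.
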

\begin{discussion}
These constraints imply restrictions on the maximizers. The second bullet implies that one cannot find distinct $u_0$, $u_1$ in $\mathcal{U}$, distinct $v_0$, $v_1$ in $\mathcal{V}$ and distinct $x_0$, $x_1$ in $\mathcal{X}$ such that $p(x_0|u_0,v_0)=p(x_0|u_1,v_1)=p(x_1|u_1,v_0)=p(x_1|u_0,v_1)=1$.\footnote{Let the mapping $\xi_1$ be equal to $\xi$ except that $(u_0,v_0)$ and $(u_1,v_1)$ are mapped to $x_1$ (instead of $x_0$), and $(u_1,v_0)$ and $(u_0,v_1)$ are mapped to $x_0$ (instead of $x_1$). Figure \ref{MappingSameProfile} illustrates this. The mapping $\xi_1$ has the same profile vector as $\xi$. Thus we can write the original profile as a convex combination of other profiles (the condition in the displayed equation of the second bullet is violated for the choice of $M=1$, $\xi_1$ and $\alpha_1=1$). Thus the second bullet implies that it cannot happen. Similarly the mapping shown in Figure \ref{MappingSameProfile2} cannot occur because there is another mapping with the same profile.}
\begin{figure}
\centering
\includegraphics[width=65mm]{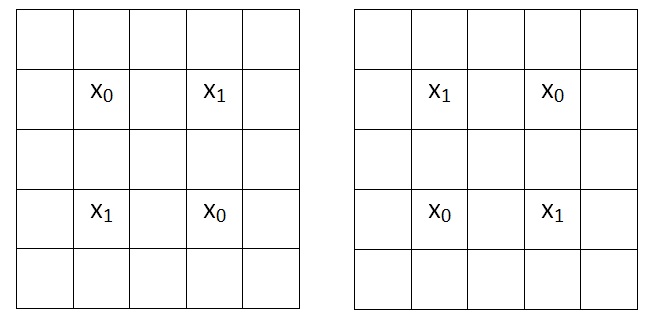}
\caption{If we have a mapping with the XOR structure, we can get another mapping with the same profile by switching $x_0$ and $x_1$ of four cells of the mappings. } \label{MappingSameProfile}
\end{figure}
\begin{figure}
\centering
\includegraphics[width=76mm]{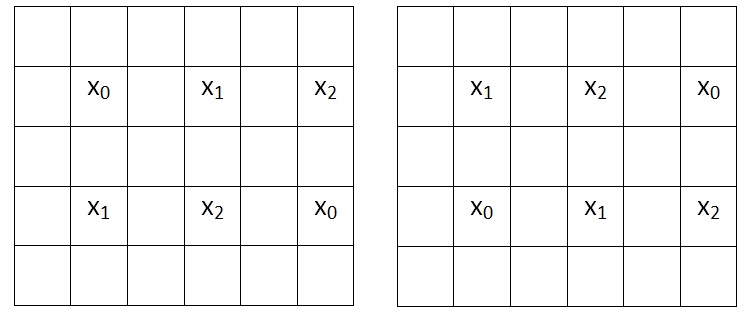}
\caption{Another mapping that cannot occur because one can find another mapping with the same profile.} \label{MappingSameProfile2}
\end{figure}

Next, assume that all we know about the mapping pattern is that $x_0=\xi(u_0,v_0)=\xi(u_1,v_1)$ for some $x_0$. Then the third bullet implies that $p(u_0,v_0)p(u_1,v_1)\leq p(u_1,v_0)p(u_0,v_1)$. This holds since \begin{eqnarray*}&\log p(u_0,v_0)+ \log p(u_1,v_1) =\\& f_{u_0}(x_0)+g_{v_0}(x_0)+h(x_0)+\\&f_{u_1}(x_0)+g_{v_1}(x_0)+h(x_0)=\\&
 f_{u_0}(x_0)+g_{v_1}(x_0)+h(x_0)+\\&
 f_{u_1}(x_0)+g_{v_0}(x_0)+h(x_0)\leq\\&
 \max_{x} f_{u_0}(x)+g_{v_1}(x)+h(x)+\\&
 \max_{x}f_{u_1}(x)+g_{v_0}(x)+h(x)= \\&
 \log p(u_0,v_1)+ \log p(u_1,v_0). \end{eqnarray*}
\end{discussion}

\subsubsection{Sum rate evaluation}
In this subsection we turn to evaluation of the whole sum-rate expression of Marton's inner bound (including the $W$ terms). We need the following definition:

For any $\lambda\in [0,1]$, let \begin{eqnarray*}&T_\lambda=\max_{p(u,v,w,x)}\big(\lambda I(W;Y)+ (1-\lambda) I(W;Z)+\\& I(U;Y|W)+I(V;Z|W)-I(U;V|W)\big).\end{eqnarray*}
Computing the sum-rate for Marton's inner bound is closely related to the above maximization problem for $\lambda\in [0,1]$:
\begin{claim}\label{Obs:ObsForMarton1}The maximum of the sum-rate for Marton's inner bound is equal to $\min_{\lambda\in [0,1]}T_\lambda$.
\end{claim}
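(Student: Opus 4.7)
The plan is to prove both inequalities, abbreviating $M(p) := I(U;Y|W) + I(V;Z|W) - I(U;V|W)$ and $F(p,\lambda) := \lambda I(W;Y) + (1-\lambda) I(W;Z) + M(p)$, so that $T_\lambda = \max_p F(p,\lambda)$ and the sum-rate of interest is $S := \max_p [\min\{I(W;Y),I(W;Z)\} + M(p)]$. The easy direction $S \leq \min_\lambda T_\lambda$ uses the pointwise bound $\min\{a,b\} \leq \lambda a + (1-\lambda) b$: at every $p$ and every $\lambda \in [0,1]$, the sum-rate expression is dominated by $F(p,\lambda) \leq T_\lambda$, so maximizing in $p$ and taking the infimum in $\lambda$ yields the inequality.

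For the reverse direction, I would use that $\lambda \mapsto T_\lambda$ is a pointwise supremum of affine functions, hence convex on $[0,1]$; let $\lambda^* \in [0,1]$ be a minimizer. Applying Danskin's theorem (justified by the cardinality bounds of \cite{EvaluationMarton}, which render the optimization domain compact and $F$ jointly continuous), the subdifferential $\partial T(\lambda^*)$ equals the interval $[\min_{p^*}(I_{p^*}(W;Y)-I_{p^*}(W;Z)),\ \max_{p^*}(I_{p^*}(W;Y)-I_{p^*}(W;Z))]$, where $p^*$ ranges over the maximizers of $F(\cdot,\lambda^*)$. For $\lambda^* \in (0,1)$ one has $0 \in \partial T(\lambda^*)$, yielding two optimizers $p^+, p^-$ with $I_{p^+}(W;Y) \geq I_{p^+}(W;Z)$ and $I_{p^-}(W;Y) \leq I_{p^-}(W;Z)$, and weights $\gamma_+, \gamma_- \geq 0$ summing to $1$ that average their information-differences to zero.

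The core construction is a time-sharing placed \emph{inside} $W$: introduce a binary $Q \sim (\gamma_+,\gamma_-)$, set $W' := (Q,W)$, and conditional on $Q=q$ draw $(U,V,W,X) \sim p^q$; call the resulting joint $p'$. A routine calculation gives $I(W';Y) = I(Q;Y) + \sum_q \gamma_q I_{p^q}(W;Y)$ (and similarly for $Z$), while the conditional quantities assemble into $M(p') = \sum_q \gamma_q M(p^q)$. Substituting shows that
\[
F(p', \lambda^*) = T_{\lambda^*} + \lambda^* I(Q;Y) + (1-\lambda^*) I(Q;Z).
\]
Since $F(p', \lambda^*) \leq T_{\lambda^*}$ by definition of $T_{\lambda^*}$ and the two extra terms are nonnegative, both $I(Q;Y)$ and $I(Q;Z)$ must vanish when $\lambda^* \in (0,1)$. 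By our choice of $\gamma$, the mixture then satisfies $I(W';Y) = I(W';Z)$, so $\min\{I(W';Y),I(W';Z)\} + M(p') = F(p',\lambda^*) = T_{\lambda^*}$, giving $S \geq T_{\lambda^*}$. The boundary cases $\lambda^* \in \{0,1\}$ are handled analogously via one-sided derivatives: e.g., for $\lambda^* = 0$ the nonnegativity of $T'(0^+)$ yields an optimizer $p^*$ with $I_{p^*}(W;Y) \geq I_{p^*}(W;Z)$, whence $T_0 = I_{p^*}(W;Z) + M(p^*) = \min\{I_{p^*}(W;Y),I_{p^*}(W;Z)\} + M(p^*) \leq S$.

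The step I expect to require the most care is the convex-analytic identification of $\partial T(\lambda^*)$ with the range of information-difference values over optimizers, which relies on the cardinality bounds to guarantee compactness and on continuity of $F$. The pleasant feature of the argument is that the extra mutual informations $I(Q;Y), I(Q;Z)$ introduced by the time-sharing are forced to vanish automatically by the optimality of $T_{\lambda^*}$, which is exactly what permits equalizing the two $W'$-informations without sacrificing optimality.
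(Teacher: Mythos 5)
Your argument is correct, but it takes a genuinely different route from the paper. The paper proves the max--min exchange geometrically: it defines the planar region $\mathcal{D}$ of pairs $(d_1,d_2)$ dominated by $\big(I(W;Y)+M(p),\,I(W;Z)+M(p)\big)$ over all $p$, proves $\mathcal{D}$ is convex (by the same trick you use, folding a time-sharing variable $Q$ into $W$), shows that $(S,S)$ is a boundary point, and then invokes the supporting-hyperplane theorem at $(S,S)$ to produce a $\lambda^*\in[0,1]$ with $T_{\lambda^*}=S$; the easy direction $T_\lambda \geq S$ is the same in both arguments. You instead work directly with the dual function $\lambda \mapsto T_\lambda$: convexity, Danskin's subdifferential at a minimizer $\lambda^*$, selection of maximizers whose values of $I(W;Y)-I(W;Z)$ straddle zero, and an explicit mixture $W'=(Q,W)$ in which optimality forces $I(Q;Y)=I(Q;Z)=0$ and the chosen weights equalize $I(W';Y)=I(W';Z)$, so the min is attained at the $\lambda^*$-combination. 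What each buys: the paper's route needs only elementary convexity of $\mathcal{D}$ plus the hyperplane theorem (no one-sided derivatives or subdifferential calculus), while yours is constructive --- it exhibits an actual distribution achieving the Marton sum-rate with $I(W;Y)=I(W;Z)$, an equalization property of independent interest, and it cleanly isolates why the $I(Q;Y)$, $I(Q;Z)$ penalty terms must vanish. One caveat on attribution rather than substance: the cardinality bounds you cite from the evaluation paper are stated there for the sum-rate expression with the $\min$, so for the compactness/attainment of $T_\lambda$ and of its arg-max set you should either invoke the support-lemma reduction for the $T_\lambda$ functional itself (as this paper does in its appendix, where $|\mathcal{W}|\leq|\mathcal{X}|$ etc.\ are derived for $T_\lambda$) or note that any finite cardinality bound suffices for your purposes; the paper's own proof has the analogous implicit burden when it asserts that $\mathcal{D}$ is closed.
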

Since the original submission of the conference version of this paper, some interesting properties of $T_\lambda$ such as its convexity in $\lambda$, and its connection to the outer bound and its factorization have been investigated in \cite{ITApaper} and \cite{ISITpaperChandra}. An alternative proof of Claim \ref{Obs:ObsForMarton1} using a theorem by Terkelsen is also reported in \cite{ITApaper}.

The main theorem of this section restricts the search space for computing $T_\lambda$. In this section, we only deal with broadcast channels $q(y,z|x)$ with strictly positive transition matrices, i.e. when $q(y|x)>0, q(z|x)>0$ for all $x,y,z$. In order to evaluate $T_\lambda$ when $q(y|x)$ or $q(z|x)$ become zero for some $y$ or $z$, one can use the continuity of $T_\lambda$ in $q(y,z|x)$ and take the limit of $T_\lambda$ for a sequence of channels with positive entries converging to the desired channel. The reason for dealing with this class of broadcast channels should become clear by the following lemma which is a corollary to the first bullet of Theorem \ref{Thm:EB1}.
\begin{lemma}\label{Lemma:LemmaForMarton1}Take an arbitrary broadcast channel $q(y,z|x)$ with strictly positive transition matrices (i.e. $q(y|x)>0, q(z|x)>0$ for all $x,y,z$). Let $p(u,v,w,x)$ be an arbitrary joint distribution maximizing $T_\lambda$ for some $\lambda\in [0,1]$ where $H(X|U,V,W)=0$. If $p(u,w)$ and $p(v,w)$ are positive for some triple $(u,v,w)$, then it must be the case that $p(u,v,w)>0$, $p(u,w,y)>0$ and $p(v,w,z)>0$ for all $y$ and $z$.
\end{lemma}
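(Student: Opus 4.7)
My plan is to reduce the problem to the per-$w$ version already addressed by Theorem~\ref{Thm:EB1}. Fix a maximizer $p^*(u,v,w,x)$ of $T_\lambda$ with $H(X|U,V,W)=0$, and split
\begin{align*}
T_\lambda = \lambda I(W;Y) + (1-\lambda) I(W;Z) + \sum_{w} p(w)\, J(w),
\end{align*}
where $J(w) := I(U;Y|W=w) + I(V;Z|W=w) - I(U;V|W=w)$. The first two terms depend only on $p(w,x)$, so for every $w_0$ with $p(w_0)>0$ the conditional law $p^*(u,v|w_0,x)$ must by itself maximize $J(w_0)$ over all $p(u,v|w_0,x)$ that induce the given marginal $p^*(x|w_0)$ and satisfy $X = \xi_{w_0}(U,V)$ for some deterministic $\xi_{w_0}$. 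This is precisely the optimization treated by Theorem~\ref{Thm:EB1}, evaluated at the input distribution $p^*(\cdot|w_0)$.

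Theorem~\ref{Thm:EB1} carries the extra hypothesis that the channel be irreducible, while the lemma only assumes strict positivity, so next I would deal with the possibility of indistinguishable input symbols. If $x_0,x_1$ satisfy $q(y|x_0)=q(y|x_1)$ and $q(z|x_0)=q(z|x_1)$, I would merge them into a single symbol $x^{\ast}$ in the table of $\xi_{w_0}$. The resulting $X^{\ast}$ is still a deterministic function of $(U,V)$, and the joint law of $(U,V,Y,Z\mid W=w_0)$ is unchanged, so the merged assignment is still a maximizer of $J(w_0)$. Iterating, one obtains an irreducible channel on a quotient alphabet to which Theorem~\ref{Thm:EB1} applies directly, and the conclusions about $p(u,v|w_0)$, $p(u,y|w_0)$ and $p(v,z|w_0)$ pull back verbatim because they do not mention $x$.

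With these reductions in place, invoking the first bullet of Theorem~\ref{Thm:EB1} on the support pair $\mathcal{U}_{w_0}=\{u:p(u|w_0)>0\}$ and $\mathcal{V}_{w_0}=\{v:p(v|w_0)>0\}$ yields $p(u,v|w_0)>0$, $p(u,y|w_0)>0$ and $p(v,z|w_0)>0$ for all $u\in\mathcal{U}_{w_0}$, $v\in\mathcal{V}_{w_0}$ and all $y,z$. Multiplying by $p(w_0)$ converts these into the three conclusions of the lemma. The main obstacle is the bookkeeping around the reduction step: I must verify that the conditional maximizer property of $p^*$ is inherited by its restriction to the effective support, and that symbol-merging preserves both the functional constraint $H(X|U,V,W=w_0)=0$ and the value of $J(w_0)$; once these points are checked, the lemma is an immediate rewording of the first bullet of Theorem~\ref{Thm:EB1}.
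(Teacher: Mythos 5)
Your proposal is correct and is essentially the argument the paper intends: the paper's proof of Lemma \ref{Lemma:LemmaForMarton1} is literally ``a consequence of bullet one of Theorem \ref{Thm:EB1},'' with the per-$w$ decomposition of $T_\lambda$ (the terms $\lambda I(W;Y)+(1-\lambda)I(W;Z)$ depend only on $p(w,x)$, so $p^*(u,v,x|w_0)$ must maximize $I(U;Y)+I(V;Z)-I(U;V)$ for the input law $p^*(x|w_0)$) left implicit, exactly as you spell it out. Where you diverge is the handling of irreducibility: in the paper that hypothesis is never used in the proof of the first bullet of Theorem \ref{Thm:EB1} (it is needed only for the second bullet), which is why the lemma can drop it; your symbol-merging reduction is a legitimate substitute given only the statement of the theorem, at the cost of some extra bookkeeping. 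If you keep that route, two points deserve a line each: (i) indistinguishability constrains only the marginals $q(y|x)$ and $q(z|x)$, so the joint of $(Y,Z)$ given the merged symbol may change, but this is harmless because $J(w_0)$ and the three positivity conclusions depend only on the $(U,V,Y)$ and $(U,V,Z)$ marginals; (ii) to see that the merged assignment is a maximizer for the quotient channel, lift any putatively better quotient competitor back to the original alphabet by splitting the merged mass over the fiber proportionally to $p^*(x|w_0)$, independently of $(U,V)$ --- this preserves the objective and the original $X$-marginal and contradicts the per-$w_0$ optimality; note that for this step you should use optimality of $p^*(u,v,x|w_0)$ over \emph{all} conditionals with the given $x$-marginal (which holds, since the maximization defining $T_\lambda$ is unconstrained), not merely over deterministic ones.
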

\begin{theorem}\label{Thm:ThmMarton1} Take an arbitrary irreducible broadcast channel $q(y,z|x)$ with strictly positive transition matrices. In computing $T_\lambda$ for some $\lambda\in [0,1]$, it suffices to take the maximum over auxiliary random variables $p(u,v,w,x)q(y,z|x)$ simultaneously satisfying the following constraints:
\begin{itemize}
  \item $|\mathcal{U}|\leq \min(|\mathcal{X}|, |\mathcal{Y}|),$ $|\mathcal{V}|\leq \min(|\mathcal{X}|, |\mathcal{Z}|),$ $|\mathcal{W}|\leq |\mathcal{X}|.$
  \item $H(X|UVW)=0$. Given $w$ where $p(w)>0$, we use $x=\xi^{(w)}(u,v)$ to denote the deterministic mapping from $\mathcal{U}_w\times \mathcal{V}_w$ to $\mathcal{X}$. Here $\mathcal{U}_w$ is the set of $u\in \mathcal{U}\mbox{ such that }p(u|w)>0$ and $\mathcal{V}_w$ is the set of $v\in \mathcal{V} \mbox{ such that }p(v|w)>0$.
  \item For arbitrary $w$ such that $p(w)>0$, the profile vector of the mapping $\xi^{(w)}$, $\overrightarrow{v_{\xi^{(w)}}}$, cannot be written as
$$\sum_{t=1}^{M}\alpha_t\overrightarrow{v_{\xi_t}},$$
where $\xi_t$ (for $t=1,2,3,...,M$) are deterministic mappings from $\mathcal{U}_w\times \mathcal{V}_w$ to $\mathcal{X}$ not equal to $\xi^{(w)}$, and $\alpha_t$ are non-negative numbers adding up to one, i.e. $\sum_{t=1}^{M}\alpha_t=1$.
  \item For arbitrary $w$ such that $p(w)>0$, let the functions \begin{eqnarray*}&f_{u,w}:\mathcal{X}\rightarrow \mathbb{R}\mbox{ for every }u\in \mathcal{U}_w,\\&g_{v,w}:\mathcal{X}\rightarrow \mathbb{R}\mbox{ for every }v\in \mathcal{V}_w,\\&
\mbox{ and }h_{w}:\mathcal{X}\rightarrow \mathbb{R},\end{eqnarray*} be defined by
\begin{eqnarray*}&f_{u,w}(x)= \sum_{y}q(y|x)\log p(uy|w),
\\&
g_{v,w}(x)=\sum_{z}q(z|x)\log p(vz|w),
\end{eqnarray*}
$$h_w(x)=\min_{u'\in \mathcal{U}_w,v'\in \mathcal{V}_w}\bigg(\log (p(u'v'|w))$$$$-f_{u',w}(x)-g_{v',w}(x)\bigg).$$
These definitions make sense because of Lemma \ref{Lemma:LemmaForMarton1}. Then, for any $u\in \mathcal{U}_w$ and $v\in \mathcal{V}_w$, the following two equations hold:
    \begin{eqnarray*}&\log(p(uv|w))=\max_x [f_{u,w}(x)+g_{v,w}(x)+h_w(x)],\end{eqnarray*}
    and
  \begin{eqnarray*}&p(x_0|u,v,w)=1 \mbox{ for some } x_0\in \mathcal{X} \Rightarrow \\& x_0\in argmax_x [f_{u,w}(x)+g_{v,w}(x)+h_w(x)].\end{eqnarray*}
  \item Given any $w$, random variables $U_w, V_w, X_w, Y_w, Z_w$ distributed according to $p(u,v,x,y,z|w)$ satisfy the following:
      \small
      \begin{eqnarray*}&I(\overline{U};Y_w)\geq I(\overline{U};V_wZ_w)\mbox{ for any }\overline{U}\rightarrow U_w \rightarrow V_wX_wY_wZ_w,
\\&I(\overline{V};Z_w)\geq I(\overline{V};U_wY_w)\mbox{ for any }\overline{V}\rightarrow V_w \rightarrow U_wX_wY_wZ_w.\end{eqnarray*}
\normalsize
\end{itemize}

\end{theorem}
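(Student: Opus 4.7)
The overall strategy is to verify each of the five constraints on an arbitrary maximizer of $T_\lambda$ by combining Theorem \ref{Thm:EB1}, applied conditionally on $W{=}w$, with Fenchel--Eggleston cardinality reductions and variational perturbations. I would first write
\[
T_\lambda \;=\; \lambda I(W;Y)+(1-\lambda)I(W;Z)+\sum_w p(w)\bigl[I(U;Y|W{=}w)+I(V;Z|W{=}w)-I(U;V|W{=}w)\bigr],
\]
and temporarily fix $p(w,x)$. The inner maximization over $p(u,v|w,x)$ with $X$ a deterministic function of $(U,V)$ is exactly the problem treated by Theorem \ref{Thm:EB1} with $p(x|w)$ in place of $p(x)$; its three bullets in each $w$-slice, together with Lemma \ref{Lemma:LemmaForMarton1} for the positivity statements, yield bullets~3 and~4. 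Bullet~2, that $H(X|UVW)=0$ may be assumed, is the standard reduction already contained in \cite{EvaluationMarton}.

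For the cardinality bounds in bullet~1, I would apply Fenchel--Eggleston--Carath\'eodory in two stages. To obtain $|\mathcal W|\le|\mathcal X|$, rewrite
\[
T_\lambda \;=\; \lambda H(Y)+(1-\lambda)H(Z)+\sum_w p(w)\,G\bigl(p(x|w)\bigr),
\]
with $G(p(x|w))=-\lambda H(Y|W{=}w)-(1-\lambda)H(Z|W{=}w)+J(p(x|w))$ a functional of $p(x|w)\in\Delta(\mathcal X)$ alone, where $J(\cdot)$ denotes the inner Marton-type maximum. Preserving the $|\mathcal X|-1$ free parameters of $p(x)=\sum_w p(w)p(x|w)$ together with the single value $\sum_w p(w)G(p(x|w))$ is a collection of $|\mathcal X|$ linear constraints; since a maximizer sits on the upper boundary of the corresponding convex hull, $|\mathcal W|\le|\mathcal X|$ follows from the boundary version of Carath\'eodory. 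The slack bounds $|\mathcal U|,|\mathcal V|\le|\mathcal X|$ come from the analogous Carath\'eodory step inside each $w$-slice. The strengthened bounds $|\mathcal U|\le|\mathcal Y|$ and $|\mathcal V|\le|\mathcal Z|$ require a second Fenchel--Eggleston pass, applied to $u\mapsto p(y|u,w)$ inside the $w$-slice and preserving $p(y|w)$ plus the inner objective; the invariance that makes this preservation meaningful is precisely what bullet~5 will provide.

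For bullet~5, the plan is a perturbation-of-$W$ argument. Given $\overline U\to U_w\to V_wX_wY_wZ_w$, enlarge $W$ to $\tilde W=(W,\overline U)$, either globally or only inside the $w_0$-slice. A direct chain-rule computation yields
\[
T_\lambda^{\text{new}}-T_\lambda^{\text{old}}\;=\;I(\overline U;V,Z|W)-(1-\lambda)I(\overline U;Y|W)-\lambda I(\overline U;Z|W),
\]
and optimality forces the right-hand side to be non-positive. Specializing to a single $w_0$-slice and pairing with the symmetric $V$-side splitting produces the per-$w$ $\lambda$-weighted inequality. The main obstacle will be to promote this $\lambda$-weighted statement to the $\lambda$-free form $I(\overline U;Y_w)\ge I(\overline U;V_wZ_w)$ that appears in the theorem; I expect this to require combining the variational inequality with the KKT-type equations of bullet~4, or alternatively exploiting the non-decomposability of the profile vector from bullet~3 to reassign $(u,v)$-mass in a way that kills the $\lambda$-dependence. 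Closing this step, and the related $|\mathcal U|\le|\mathcal Y|$ cardinality reduction which depends on it, is where I expect the bulk of the real work to lie.
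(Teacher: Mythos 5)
Your slice decomposition, the per-slice application of Theorem \ref{Thm:EB1} with Lemma \ref{Lemma:LemmaForMarton1} for bullets 3--4, and your variational step for bullet 5 are all in the right spirit, and in fact the obstacle you flag is smaller than you think: from the per-slice inequality $I(\overline{U};V_wZ_w)\leq (1-\lambda)I(\overline{U};Y_w)+\lambda I(\overline{U};Z_w)$ one gets, using $I(\overline{U};Z_w)\leq I(\overline{U};V_wZ_w)$, that $(1-\lambda)I(\overline{U};V_wZ_w)\leq(1-\lambda)I(\overline{U};Y_w)$, so for $\lambda<1$ the $\lambda$-free inequality follows immediately (and symmetrically the $V$-side follows for $\lambda>0$); no KKT equations or profile-vector argument is needed. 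The genuine gap is at the boundary values: at $\lambda=1$ the $U$-side inequality (and at $\lambda=0$ the $V$-side) does not follow from optimality of $T_\lambda$ alone, and indeed it can \emph{fail} for an arbitrary maximizer --- at $\lambda=1$, absorbing $\overline{U}$ into $W$ changes the objective by $I(\overline{U};V|WZ)$, which can be zero while $I(\overline{U};V_wZ_w)>I(\overline{U};Y_w)$. This is why the paper does not argue about arbitrary maximizers: it selects, among all maximizers, one with the \emph{largest} value of $I(W;Y)+I(W;Z)$, shows (Appendix \ref{Appndx:D}, Statement 1) that each slice law $p(x|w)$ then lies in a set $\mathcal{T}$ on which introducing a superposition variable cannot help and any slice maximizer has $I(\widehat{W};\widehat{Y})=I(\widehat{W};\widehat{Z})=0$, and uses exactly this extra property to settle the $\lambda=1$ (resp.\ $\lambda=0$) case in Statement 2. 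Your proposal contains no such tie-breaking device, so bullet 5 is not established at the endpoints of $[0,1]$.

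The second gap is the requirement that all five bullets hold \emph{simultaneously}. Because bullet 5 rests on the secondary property (maximal $I(W;Y)+I(W;Z)$ among maximizers), the cardinality reduction of bullet 1 must preserve that property as well as the objective; a generic Fenchel--Eggleston/Carath\'eodory support reduction, as sketched in your proposal, preserves neither automatically. The paper's Appendix \ref{Appndx:B} instead reduces supports by explicit perturbations $p_\epsilon=p_0[1+\epsilon L]$: for $|\mathcal{W}|\leq|\mathcal{X}|$ with $\mathbb{E}[L(W)|X]=0$ (both the objective and $I(W;Y)+I(W;Z)$ are linear in $\epsilon$, hence constant by the two-level maximality), and for $|\mathcal{U}|\leq|\mathcal{Y}|$ with $\mathbb{E}[L(U,W)|WY]=0$, where convexity of $r(x)=(1+x)\log(1+x)$ shows the objective and $I(W;Y)+I(W;Z)$ can only increase and hence are constant, so pushing $\epsilon$ to an endpoint shrinks the support while keeping both invariants. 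Your suggestion that the $|\mathcal{U}|\leq|\mathcal{Y}|$ step should lean on bullet 5 reverses the actual logical order: in the paper the reduction is self-contained, and bullet 5 is verified afterwards for the reduced distribution precisely because the reduction preserved the secondary maximality. Without these preservation arguments your construction does not deliver a single $p(u,v,w,x)$ satisfying all constraints at once.
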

\begin{discussion}
The first constraint imposes cardinality bounds on $|\mathcal{U}|$ and $|\mathcal{V}|$ that are better than those reported in \cite{EvaluationMarton}. \emph{However, we only claim the improved cardinality bounds for $T_\lambda$ and not the whole capacity region.} The second constraint is not new, and can be found in \cite{EvaluationMarton}. The other constraints are useful in restricting the search space due to the constraints imposed on $p(u,v,w,x)$. For instance, the third and fourth bullet restrict the set of possible mappings, as discussed in the previous subsection.
\end{discussion}

\subsection{Optimality along certain directions}
In order to state the main result of this section we need the following definition:
\begin{definition}\label{Definition3}\cite{Korner-Marton} Let $\mathcal{C}_{d_1}(q(y, z|x))$ and
$\mathcal{C}_{d_2}(q(y, z|x))$ denote the degraded message set capacity regions, i.e. when $R_1=0$ and $R_2=0$, respectively. The capacity region
$\mathcal{C}_{d_1}(q(y, z|x))$ is the set of of non-negative rate pairs
$(R_0, R_2)$ satisfying
\begin{align*}
    R_0&\leq I(W;Y), \\
    R_2&\leq I(X;Z|W),\\
 R_0+R_2&\leq I(X;Z),
  \end{align*}
for some random variables $(W,X,Y,Z)\sim p(w,x)q(y,z|x)$. The capacity region
$\mathcal{C}_{d_2}(q(y, z|x))$ is
defined similarly.
\end{definition}

We now state the main result of this section:
\begin{theorem}\label{Obs:Obs2} For a broadcast channel $q(y, z|x)$ and real numbers
$\lambda_0$, $\lambda_1$ and $\lambda_2$ such that $\lambda_0\ge
\lambda_1+\lambda_2$,
\begin{align*}
&\max_{(R_0,R_1,R_2) \in
\mathcal{C}(q(y,z|x))} (\lambda_0R_0+\lambda_1R_1+\lambda_2R_2) =\\&\max \{ \max_{(R_0,R_2)\in
\mathcal{C}_{d_1}(q(y,z|x))} (\lambda_0R_0+\lambda_2R_2) ,\\&
\max_{(R_0,R_1)\in
\mathcal{C}_{d_2}(q(y,z|x))} (\lambda_0R_0+\lambda_1R_1) \},
\end{align*}
where $\mathcal{C}_{d_1}(q(y,z|x))$ and
$\mathcal{C}_{d_2}(q(y,z|x))$ are the degraded message set capacity
regions for the given channel.
\end{theorem}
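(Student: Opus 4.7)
The inclusion $\geq$ is immediate: $\mathcal{C}_{d_1}$ and $\mathcal{C}_{d_2}$ coincide with the slices $\{(R_0,0,R_2)\in\mathcal{C}\}$ and $\{(R_0,R_1,0)\in\mathcal{C}\}$ of the full capacity region, so the maximum of any linear functional over either is at most the maximum over $\mathcal{C}(q(y,z|x))$.

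The content is in the reverse inequality. Without loss of generality $\lambda_1,\lambda_2\geq 0$: if some $\lambda_i<0$ the maximum over $\mathcal{C}$ is attained at $R_i=0$ and the result reduces directly to $\mathcal{C}_{d_1}$ or $\mathcal{C}_{d_2}$. Given a sequence of codes achieving $(R_0,R_1,R_2)\in\mathcal{C}$ with vanishing error, Fano's inequality supplies the $n$-letter bounds $n R_0 \leq \min\{I(M_0;Y^n),\, I(M_0;Z^n)\}+n\epsilon_n$, $n(R_0+R_1) \leq I(M_0 M_1;Y^n)+n\epsilon_n$, and $n(R_0+R_2) \leq I(M_0 M_2;Z^n)+n\epsilon_n$. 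I plan to combine these with the algebraic identity
\begin{align*}
\lambda_0 R_0+\lambda_1 R_1+\lambda_2 R_2 &= \lambda_1(R_0+R_1)+\lambda_2(R_0+R_2)\\
&\quad +(\lambda_0-\lambda_1-\lambda_2)R_0,
\end{align*}
whose three coefficients are all nonnegative under the hypothesis, and then single-letterize in the usual way, introducing auxiliaries $W_i=(M_0,Y_{i+1}^n,Z^{i-1})$ together with $U_i$ and $V_i$ carrying $M_1$ and $M_2$, and applying the Csisz\'ar sum identity. This yields a single-letter outer bound in terms of $I(W;Y), I(W;Z), I(UW;Y), I(VW;Z)$ and cross quantities, for some $p(u,v,w,x)q(y,z|x)$ with $X$ a function of $(U,V,W)$.

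The concluding step is a case split on whether $I(W;Y)\leq I(W;Z)$ at the optimizing auxiliary or the reverse. In the former case I plan to match the single-letter bound to a point of $\mathcal{C}_{d_1}$ built from the same $W$ and $X$, attaining $\max_{\mathcal{C}_{d_1}}(\lambda_0 R_0+\lambda_2 R_2)$; in the latter, symmetrically, to $\mathcal{C}_{d_2}$. The assumption $\lambda_0\geq\lambda_1+\lambda_2$ is precisely what enables this collapse: the residual coefficient $\lambda_0-\lambda_1-\lambda_2$ of $R_0$ being nonnegative forces the optimizer of the reduced linear program onto a face of the relevant pentagon on which one private rate vanishes.

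The main obstacle is eliminating the Marton-type cross-term $\lambda_1 I(U;Y|W)$ (or $\lambda_2 I(V;Z|W)$) in the outer bound, which has no analogue in the degraded-message-set regions. I plan to handle this by invoking a sum-rate outer bound of the form $R_0+R_1+R_2\leq I(UW;Y)+I(V;Z|UW)$ (or its symmetric dual) in place of one of the pair-rate bounds: under the alternative regrouping $\lambda_2(R_0+R_1+R_2)+(\lambda_1-\lambda_2)(R_0+R_1)+(\lambda_0-\lambda_1)R_0$ (say when $\lambda_1\geq\lambda_2$), the hypotheses $\lambda_0\geq\lambda_1+\lambda_2$ and $\lambda_0\geq\lambda_1$ make the three coefficients nonnegative and are exactly what allows the cross-term to cancel, so the outer bound collapses onto the desired degraded-message-set region.
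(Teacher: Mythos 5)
Your preliminary steps (the trivial direction, Fano, and the regrouping into nonnegative multiples of $R_0+R_1+R_2$, $R_0+R_1$ and $R_0$) are fine, but the proof has a genuine gap precisely at the point you yourself flag as ``the main obstacle'': the collapse of your single-letter outer bound onto the degraded-message-set maxima is asserted, not proved, and it is not a routine cancellation. Concretely, with $\lambda_1\ge\lambda_2\ge 0$ your plan requires showing that for \emph{every} admissible $p(u,v,w,x)$ one has $\lambda_2\big[I(UW;Y)+I(V;Z|UW)\big]+(\lambda_1-\lambda_2)I(UW;Y)+(\lambda_0-\lambda_1)\min\{I(W;Y),I(W;Z)\}\le \max\big\{\max_{\mathcal{C}_{d_1}}(\lambda_0R_0+\lambda_2R_2),\max_{\mathcal{C}_{d_2}}(\lambda_0R_0+\lambda_1R_1)\big\}$. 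Writing $A=(U,W)$, the natural match in $\mathcal{C}_{d_1}$ with cloud center $A$, namely $R_0=I(A;Y)$, $R_2=I(X;Z|A)$, is feasible only when $I(A;Y)\le I(A;Z)$ (otherwise the constraint $R_0+R_2\le I(X;Z)$ is violated); and in the opposite case the natural match in $\mathcal{C}_{d_2}$ with cloud $A$, namely $R_0=I(A;Z)$, $R_1=I(X;Y|A)$, dominates your expression (say with $V=X$ and $\lambda_1=\lambda_2$, $\min\{I(W;Y),I(W;Z)\}$ close to $I(A;Z)$) only if $I(X;Y)-I(X;Z)\ge 2\big[I(A;Y)-I(A;Z)\big]$, which is false in general. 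So your proposed case split on $I(W;Y)$ versus $I(W;Z)$, ``built from the same $W$,'' does not dispose of the cross-term $\lambda_2 I(V;Z|UW)$; if the domination holds it must come from other operating points and a finer analysis that you have not supplied. Note also that success on this route would establish something strictly stronger than the theorem, namely that the $(U,V,W)$ outer bound you derive is tight along every direction with $\lambda_0\ge\lambda_1+\lambda_2$; nothing in your argument shows this, and it cannot be taken for granted.

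The paper sidesteps all of this by never invoking an outer bound on the full region. Given any code and assuming $R_2\le R_1$, it merges the lower-rate private message into the common message: with $\widehat{W}=(M_0,M_2)$ and $\widehat{X}=X^n$, Fano's inequality gives $I(\widehat{W};Z^n)\ge n(R_0+R_2)-O(n\epsilon)$, $I(X^n;Y^n|\widehat{W})\ge n(R_1-R_2)-O(n\epsilon)$ and $I(X^n;Y^n)\ge n(R_0+R_1)-O(n\epsilon)$, so the pair $\big(n(R_0+R_2),n(R_1-R_2)\big)$ lies, up to $O(n\epsilon)$, in the degraded-message-set region of the $n$-fold channel; since the K\"{o}rner--Marton region tensorizes, $(R_0+R_2,R_1-R_2,0)$ lies in the single-letter degraded-message-set region (hence in the capacity region), and the hypothesis $\lambda_0\ge\lambda_1+\lambda_2$ gives $\lambda_0(R_0+R_2)+\lambda_1(R_1-R_2)\ge\lambda_0R_0+\lambda_1R_1+\lambda_2R_2$. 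The only single-letterization used is the already-known characterization of the degraded-message-set capacity region, which is exactly what your approach would have to reprove in a harder, weighted form. To salvage your route you would need to carry out the single-letter domination argument above in full; the message-merging reduction is the shorter and safer path.
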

\begin{corollary}
The above observation essentially says that if $\lambda_0\geq \lambda_1+\lambda_2$, then \emph{a} maximum of
$\lambda_0R_0+\lambda_1R_1+\lambda_2R_2$ over triples $(R_0, R_1,
R_2)$ in the capacity region occurs when either $R_1=0$ or $R_2=0$.
\end{corollary}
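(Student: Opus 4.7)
The easy direction $\ge$ is immediate: the slice $R_1=0$ of $\mathcal{C}$ contains $\{(R_0,0,R_2):(R_0,R_2)\in\mathcal{C}_{d_1}\}$, and the slice $R_2=0$ contains $\{(R_0,R_1,0):(R_0,R_1)\in\mathcal{C}_{d_2}\}$, so $\max_{\mathcal{C}}(\lambda_0R_0+\lambda_1R_1+\lambda_2R_2)$ is at least each of the two right-hand-side maxima, hence at least their maximum.

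For the nontrivial direction, I would fix an achievable $(R_0,R_1,R_2)$, apply Fano's inequality to an $n$-block code attaining it, and use the standard bounds
\begin{align*}
nR_0 &\le \min\{I(M_0;Y^n),\,I(M_0;Z^n)\}+n\epsilon_n,\\
n(R_0+R_1) &\le I(M_0M_1;Y^n)+n\epsilon_n,\\
n(R_0+R_2) &\le I(M_0M_2;Z^n)+n\epsilon_n.
\end{align*}
Since $\lambda_0\ge\lambda_1+\lambda_2$, I decompose
$$\lambda_0R_0+\lambda_1R_1+\lambda_2R_2=(\lambda_0-\lambda_1-\lambda_2)R_0+\lambda_1(R_0+R_1)+\lambda_2(R_0+R_2),$$
with nonnegative first coefficient. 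Without loss of generality assume $I(M_0;Y^n)\le I(M_0;Z^n)$ (the opposite case swaps $Y\leftrightarrow Z$ and $R_1\leftrightarrow R_2$). Using $I(M_0;Y^n)\le I(M_0M_1;Y^n)$, the first two contributions collapse into one, giving
$$nL \le (\lambda_0-\lambda_2)\,I(M_0M_1;Y^n)+\lambda_2\,I(M_0M_2;Z^n)+n\epsilon_n.$$

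Next I would recognize this as a weighted-sum bound for the degraded message set region $\mathcal{C}_{d_1}$ applied to the $n$-fold product channel: take the cloud-center auxiliary $W=(M_0,M_1)$ (decoded at $Y$) and $M_2$ as the private message to $Z$, so $(R_0',R_2')$ with $R_0'=I(W;Y^n)/n$ and $R_0'+R_2'=I(M_0M_2;Z^n)/n$ corresponds to a rate pair in $\mathcal{C}_{d_1}^{(n)}$. Because Körner-Marton single-letterizes the degraded message set capacity, $\mathcal{C}_{d_1}^{(n)}=n\,\mathcal{C}_{d_1}$, and the right-hand side above divided by $n$ is at most $L_1$ up to vanishing terms. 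Letting $n\to\infty$ yields $L\le L_1$; the symmetric case gives $L\le L_2$, establishing $L\le\max(L_1,L_2)$.

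The delicate point is justifying that the multi-letter expression $(\lambda_0-\lambda_2)I(W;Y^n)+\lambda_2 I(X^n;Z^n)$ is indeed dominated by the polytope maximum of $\mathcal{C}_{d_1}^{(n)}$: the vertex realizing $(\lambda_0-\lambda_2)I(W;Y)+\lambda_2 I(X;Z)$ sits on the polytope only when $I(W;Y)\ge I(W;Z)$, so in the complementary sub-case I would either enlarge the auxiliary (for example to $W=X^n$, with data processing preserving the inequality) or reroute through the symmetric analysis and close via $L_2$. I expect this sub-case bookkeeping to be the main obstacle; the rest is a routine chain of Fano, chain-rule, and data-processing steps driven entirely by the algebraic identity $\lambda_0\ge\lambda_1+\lambda_2$.
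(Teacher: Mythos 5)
Your Fano setup and the easy direction are fine, but the hard direction breaks at exactly the step you defer as ``sub-case bookkeeping,'' and it is not bookkeeping: after you collapse $(\lambda_0-\lambda_1-\lambda_2)I(M_0;Y^n)+\lambda_1 I(M_0M_1;Y^n)$ into $(\lambda_0-\lambda_2)I(M_0M_1;Y^n)$, the resulting quantity $Q=(\lambda_0-\lambda_2)I(M_0M_1;Y^n)+\lambda_2 I(M_0M_2;Z^n)$ is in general not bounded by $n\max(L_1,L_2)$, let alone by $nL_1$ (here $L_1=\max_{(R_0,R_2)\in\mathcal{C}_{d_1}}\lambda_0R_0+\lambda_2R_2$ and $L_2=\max_{(R_0,R_1)\in\mathcal{C}_{d_2}}\lambda_0R_0+\lambda_1R_1$). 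Concretely, let $X=(X_1,X_2,X_3)$ be three noiseless bits with $Y=(X_1,X_2)$, $Z=X_3$, and $(\lambda_0,\lambda_1,\lambda_2)=(3,1,1)$; then $L_1=3$, $L_2=4$, but the zero-error code with $R_0=0$, $R_1=2$, $R_2=1$ has $I(M_0;Y^n)=I(M_0;Z^n)=0$ (so your WLOG does not even select a branch) and $Q=2\cdot 2n+n=5n>4n$. Hence no auxiliary choice can exhibit $Q/n$ as $\lambda_0R_0'+\lambda_2R_2'$ for a pair in $\mathcal{C}_{d_1}$, and neither of your proposed repairs closes the gap: with $W=X^n$ the private rate in $\mathcal{C}_{d_1}$ is forced to zero and the objective is capped near $\lambda_0\min(I(X^n;Y^n),I(X^n;Z^n))=3n<Q$; and ``rerouting through the symmetric analysis'' requires undoing the collapse and redoing it on the $Z$ side, but your selector (comparing $I(M_0;Y^n)$ with $I(M_0;Z^n)$) is the wrong one. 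The structural reason is that $(M_0,M_1)$ is not a legitimate cloud center for $\mathcal{C}_{d_1}$: receiver $Z$ never decodes $M_1$, which is precisely why the constraints $R_2'\ge 0$ and $R_2'\le I(X^n;Z^n|M_0M_1)/n$ can fail.

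The missing idea --- the paper's key step --- is that the side on which you absorb the common message must be chosen by comparing the two \emph{private rates}, not the common-message informations. Assume WLOG $H(M_2)\le H(M_1)$ and bundle $M_0$ with the smaller private message: take $\widehat{W}=(M_0,M_2)$, $\widehat{X}=X^n$, and use Fano at $Z$ (which decodes $M_0,M_2$) and at $Y$ (which decodes $M_1$) to verify $\widehat{R}_0\le I(\widehat{W};Z^n)$, $\widehat{R}_1\le I(X^n;Y^n|\widehat{W})$, $\widehat{R}_0+\widehat{R}_1\le I(X^n;Y^n)$ for $\widehat{R}_0=n(R_0+R_2)-nO(\epsilon)$, $\widehat{R}_1=n(R_1-R_2)-nO(\epsilon)$; since the degraded-message-set region single-letterizes (K\"orner--Marton), this puts $(R_0+R_2,\,R_1-R_2,\,0)$ in the capacity region, and $\lambda_0\ge\lambda_1+\lambda_2$ then gives $\lambda_0(R_0+R_2)+\lambda_1(R_1-R_2)\ge\lambda_0R_0+\lambda_1R_1+\lambda_2R_2$, so a maximizer with $\min(R_1,R_2)=0$ exists. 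Your write-up never identifies this selector nor verifies membership of the shifted point, so the proof has a genuine gap at its central step. (A minor additional point: the theorem allows arbitrary real $\lambda_i$ with $\lambda_0\ge\lambda_1+\lambda_2$, and your weighted Fano steps implicitly need $\lambda_1,\lambda_2\ge 0$, whereas the shifting argument only needs $\min(R_1,R_2)\ge0$ and $\lambda_0-\lambda_1-\lambda_2\ge0$.)
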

\begin{remark} Since $\mathcal{C}_{d_1}(q(y,z|x))\cup\mathcal{C}_{d_1}(q(y,z|x))\subset \mathcal{C}_M(q(y,z|x))\subset \mathcal{C}(q(y,z|x))$, the above lemma implies that Marton's inner bound is tight along the
direction of such $(\lambda_0,\lambda_1,\lambda_2)$, i.e.
\begin{align*}
&\max_{(R_0,R_1,R_2)\in
\mathcal{C}(q(y,z|x))} (\lambda_0R_0+\lambda_1R_1+\lambda_2R_2) =\\&\quad \max_{(R_0,R_1,R_2)\in
\mathcal{C}_M(q(y,z|x))} (\lambda_0R_0+\lambda_1R_1+\lambda_2R_2), \\& \mbox{whenever }\lambda_0\geq \lambda_1+\lambda_2.
\end{align*}
\end{remark}

\subsection{An achievable region}
Since capacity is defined in the limit of large block length, it is natural to
expect that optimal coding schemes have an invariant structure with respect
to shifts in time. This suggests that capacity should be expressed
via a formula that has a fixed-point character, namely it should involve joint
distributions that are invariant under a time shift. The following theorem is
a proposed inner bound along these lines.

\begin{theorem}\label{Thm:Thm2} For a broadcast channel $q(y, z|x)$, consider
two i.i.d. copies
$(U_1,V_1,W_1)$ and $(U_2,V_2,W_2)$ and a conditional pmf
$r(x|u_1,v_1,w_1,u_2,v_2,w_2)$. Assume that $U_1, V_1, W_1,
U_2, V_2, W_2, X_1, X_2, Y_1, Y_2, Z_1, Z_2$ are distributed
according to
\begin{align*}
&p(u_1, v_1, w_1, u_2, v_2, w_2, x_1,
y_1, z_1, x_2, y_2,
z_2)=\\&r(u_1,v_1,w_1)r(u_2,v_2,w_2)\cdot\\&r(x_2|u_1,v_1,w_1,u_2,v_2,w_2)q(y_2,z_2|x_2)\cdot\\
&\tilde{r}(x_1|u_1,v_1,w_1)q(y_1,z_1|x_1),
\end{align*}
where $\tilde{r}(x|u,v,w)$ is defined as
\begin{align*}
&\sum_{u'\in \mathcal{U},v'\in \mathcal{V},w'\in \mathcal{W}}r(x|u',v',w',u,v,w)r(u',v',w').
\end{align*}

Then a rate triple $(R_0,R_1,R_2)$ is achievable if
  \begin{align*}
    R_0, R_1, R_2&\geq 0,\\
    R_0+R_1&< I(U_2W_2;Y_1Y_2U_1W_1),\\
    R_0+R_2&< I(V_2W_2;Z_1Z_2V_1W_1),\\
    R_0+R_1+R_2&< I(V_2;Z_1Z_2V_1W_1|W_2)
\\&\quad + I(U_2W_2;Y_1Y_2U_1W_1)-I(U_2;V_2|W_2),\\
    R_0+R_1+R_2&<
I(U_2;Y_1Y_2U_1W_1|W_2)\\&\quad + I(V_2W_2;Z_1Z_2V_1W_1)-I(U_2;V_2|W_2),\\
    2R_0+R_1+R_2&<
I(U_2W_2;Y_1Y_2U_1W_1)\\&\quad + I(V_2W_2;Z_1Z_2V_1W_1)-I(U_2;V_2|W_2),
  \end{align*}\normalsize
  for some $U_1, V_1, W_1,U_2, V_2, W_2, X_1, X_2$ that satisfy the above conditions.
\end{theorem}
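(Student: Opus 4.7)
The plan is to prove the theorem by a block-Markov coding scheme over $B$ blocks of length $n$, combined with Marton covering within each block and sliding-window joint-typicality decoding across pairs of consecutive blocks. The fixed-point character of the statement --- that $\tilde r(x|u,v,w)$ is the $r$-average of $r(x|u',v',w',u,v,w)$ over i.i.d.\ $(u',v',w')\sim r$ --- is precisely the stationarity condition that makes the joint distribution of two successive blocks in the scheme coincide with the distribution described in the theorem, so that the mutual informations on the right-hand sides of the rate inequalities acquire their operational meaning.

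For codebook generation, in each block $b=1,\ldots,B$ I would draw $W^n_{[b]}$ i.i.d.\ $\sim r(w)$ (indexed by the block-$b$ common-message bin) and then use Marton's covering within each $w^n$ bin to produce $(U^n_{[b]},V^n_{[b]})$ codeword pairs jointly typical with the chosen $W^n_{[b]}$ according to $r(u,v,w)$; the covering size contributes the familiar $-I(U_2;V_2|W_2)$ penalty. For $b\geq 2$, the channel input $X^n_{[b]}$ is then drawn symbol by symbol from $r(x_2\,|\,u_{1,i},v_{1,i},w_{1,i},u_{2,i},v_{2,i},w_{2,i})$ using the auxiliaries chosen in blocks $b-1$ and $b$. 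For $b=1$, a dummy preceding block (or, equivalently, direct use of $\tilde r$) is employed, costing a vanishing rate loss as $B\to\infty$.

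Decoding proceeds blockwise with a two-block sliding window. Receiver $Y$, having already decoded $(U^n_{[b-1]},W^n_{[b-1]})$ by the end of block $b-1$, decodes $(U^n_{[b]},W^n_{[b]})$ at the end of block $b$ by searching for the unique pair $(u^n,w^n)$ in the allowed bin for which $(u^n,w^n,U^n_{[b-1]},W^n_{[b-1]},Y^n_{[b-1]},Y^n_{[b]})$ is jointly typical under the distribution of the theorem. A standard packing-lemma analysis then yields $R_0+R_1<I(U_2W_2;Y_1Y_2U_1W_1)$; receiver $Z$ is symmetric and produces the $Z$-side constraint. Combining this sliding-window packing across blocks with the Marton covering loss inside each block, one obtains the three sum-rate inequalities in the familiar way, with $(Y_1,Y_2,U_1,W_1)$ playing the role that $Y$ plays in the classical Marton analysis, and $(Z_1,Z_2,V_1,W_1)$ playing the analogous role for $Z$.

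The main obstacle is the bookkeeping required to argue that, conditional on the random codebook, the empirical joint distribution of the tuple $(U^n_{[b-1]},V^n_{[b-1]},W^n_{[b-1]},X^n_{[b-1]},U^n_{[b]},V^n_{[b]},W^n_{[b]},X^n_{[b]},Y^n_{[b-1]},Z^n_{[b-1]},Y^n_{[b]},Z^n_{[b]})$ is jointly typical with the single-letter distribution specified in the theorem, even though $X^n_{[b-1]}$ itself was generated using the block-$(b-2)$ auxiliaries. The fixed-point definition of $\tilde r$ is exactly what collapses this dependence: averaging over the hidden preceding block reproduces the correct marginal for $X_1$ given $(U_1,V_1,W_1)$, after which the conditional typicality lemma together with the multivariate covering and packing lemmas can be applied in a standard fashion. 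A secondary issue is the handling of edge effects at the first and last blocks, but these contribute only an $O(1/B)$ rate loss and are dealt with by routine block-Markov arguments.
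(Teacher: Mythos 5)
Your route is genuinely different from the paper's. The paper constructs no new coding scheme at all: it applies the already-established Marton inner bound to the $n$-fold product channel $q^n$ (whose capacity region is $n$ times that of $q$), chooses the multi-letter auxiliaries to be i.i.d.\ $(U_i,V_i,W_i)\sim r$ with $X_i$ generated through $r(x_i|u_{i-1},v_{i-1},w_{i-1},u_i,v_i,w_i)$ for $i\ge 2$ (and $X_1$ constant), and then single-letterizes by chain rules, e.g.\ $I(U^nW^n;Y^n)\ge (n-1)I(U_2W_2;Y_1Y_2U_1W_1)$, $I(V^n;Z^n|W^n)\ge (n-1)I(V_2;Z_1Z_2V_1W_1|W_2)$, while $I(U^n;V^n|W^n)=nI(U_2;V_2|W_2)$; letting $n\to\infty$ gives the theorem. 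All covering, packing, common-message rate splitting and Fourier--Motzkin work is thus inherited from Marton's theorem rather than re-proved. Your block-Markov/sliding-window scheme is operationally more explicit and can also yield the result, but at the price of redoing that entire analysis with two-block super-symbols.

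Within your sketch there is one step that does not follow ``in a standard fashion'' as written. You select $(U^n_{[b]},V^n_{[b]})$ by covering against $W^n_{[b]}$ only. The selected pair is independent of the block-$(b-1)$ realizations, but independence plus marginal typicality does not imply joint typicality of the two blocks (two independent, individually typical sequences need not have a near-product joint type), and this cross-block joint typicality is precisely what you need both for $X^n_{[b]}$, generated symbolwise from $r(x_2|\cdot)$, to be typical with everything, and for the correct-codeword event at the sliding-window decoders. The fixed-point property of $\tilde r$ fixes the marginal law of $X^n_{[b-1]}$ given its own auxiliaries; it does not address this selection issue. The standard repair is to include the encoder-known block-$(b-1)$ sequences $(U^n,V^n,W^n,X^n)_{[b-1]}$ in the covering test, which costs nothing since cross-block independence gives $I(U_2;V_2|W_2,U_1V_1W_1X_1)=I(U_2;V_2|W_2)$, and then to append $Y^n_{[b-1]},Z^n_{[b-1]}$ and $X^n_{[b]},Y^n_{[b]},Z^n_{[b]}$ via the conditional typicality lemma. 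You also need block $1$ to carry a fixed, known message so that the first decoding window already has the stationary two-block law (an $O(1/B)$ rate loss). With these repairs your argument should go through; the paper's product-channel argument obtains the theorem with none of this machinery.
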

\begin{remark} The above inner bound reduces to Marton's inner bound if the conditional distribution
$r(x|u_1,v_1,w_1,u_2,v_2,w_2)=r(x|u_2,v_2,w_2)$, i.e. $U_1V_1W_1\rightarrow U_2V_2W_2\rightarrow X$ form a Markov chain.
\end{remark}
\section{Proofs}\label{Section:Proofs}
\begin{proof}[Proof of Claim \ref{Obs:Obs1}] Consider the degraded broadcast channel $p(y,z|x)=p(y|x)p(z|y)$, where the channel from $X$ to $Y$ is a BSC($0.3$)
and the channel from $Y$
to $Z$ is as follows: $p_{Z|Y} (0|0)=0.6$, $p_{Z|Y}(1|0)=0.4$,
$p_{Z|Y}(0|1)=0$, $p_{Z|Y}(1|1)=1$. We show that the private message capacity region for this channel is strictly larger than Marton's inner bound without $W$.

We first intuitively sketch outline of the proof: take a
non-negative real $\alpha$ and consider the maximum of $R_1+\alpha
R_2$ over the pairs $(R_1, R_2)$ in the capacity region. Since the
broadcast channel is degraded, the maximum is equal to
$\max_{V\rightarrow X\rightarrow YZ}I(X;Y|V)+\alpha I(V;Z)$. Since $
X\rightarrow Y\rightarrow Z$, when the weight of the degraded
receiver is less than or equal to $1$, an optimum $V$ will be equal
to a constant (corresponding to $R_2=0$). As we gradually increase
$\alpha$ beyond one, the optimum $V$ \emph{gradually} moves from a
constant random variable to $X$ (corresponding to $R_1=0$). Now, let
us consider the maximum of $R_1+\alpha R_2$ over the pairs $(R_1,
R_2)$ in Marton's inner bound without the auxiliary random variable $W$.
The latter maximum is equal to $I(U;Y)+\alpha I(V;Z)-I(U;V)$. When
$\alpha\leq 1$, it is optimum to take $U=X$, $V=$constant and
dedicate all the rate to the stronger receiver. Simulation results however indicate that as we increase $\alpha$ beyond one in
the problem of maximizing $I(U;Y)+\alpha I(V;Z)-I(U;V)$, $U=X$,
$V=$constant continues to be optimal up to a threshold. Beyond this
threshold, \emph{suddenly} $U=$constant, $V=X$ becomes the
optimizing choice, and stays as the optimizing choice afterwards.
In other words, unlike the gradual transition of the maximizing $V$
for the actual region,
 there is a \emph{sharp transition} in the maximizing $V$ for Marton's inner bound without $W$.

In the following, we provide a more detailed proof: the maximum of $R_1+2.4R_2$ over pairs $(R_1, R_2)$ in the capacity
region, is equal to $\max_{V\rightarrow X\rightarrow YZ}I(X;Y|V)+2.4
I(V;Z)$. Take the joint pmf of $p(v,x)$ to be as follows:
$P(V=0,X=0)=0$, $P(V=0,X=1)=0.41$, $P(V=1,X=0)=0.48$,
$P(V=1,X=1)=0.11$. For this choice of $p(v,x)$, $I(X;Y|V)+2.4
I(V;Z)=0.1229...$. Therefore the maximum of
$R_1+2.4R_2\ge 0.1229...$. The maximum of $R_1+2.4R_2$ over
Marton's inner bound without $W$ is equal to $\sup_{UV\rightarrow
X\rightarrow YZ}I(U;V)+2.4I(V;Z)-I(U;V)$. Using the perturbation method of \cite{EvaluationMarton}, one can bound the cardinality of $U$ and $V$ from above by $|\mathcal{X}|$, and further assume that $X$ is a deterministic function of $(U,V)$. This makes the domain compact, implying that the above supremum is
indeed a maximum.

Since $X$ is a
binary random variable, we need to search over binary random
variables $U$, $V$. Numerical simulations show that the maximum is
equal to $0.1215...<0.1229...$ and occurs when $X=V$ and
$U=constant$.
Therefore Marton's inner bound without $W$ is not tight for this broadcast channel.
\end{proof}

\begin{proof}[Proof of Claim \ref{Obs:ObsForMarton1}] In order to prove the observation, one needs to argue that the following exchange of max and min is legitimate:
\begin{eqnarray*}&\max_{p(u,v,w,x)}\min_{\lambda\in[0,1]}\lambda I(W;Y)+(1-\lambda)I(W;Z)+\\&I(U;Y|W)+I(V;Z|W)-I(U;V|W)=\\&
\min_{\lambda\in[0,1]}\max_{p(u,v,w,x)}\lambda I(W;Y)+(1-\lambda)I(W;Z)+\\&I(U;Y|W)+I(V;Z|W)-I(U;V|W).\end{eqnarray*}

Let $R_{Marton-Sum}$ denote the sum-rate for Marton's inner bound. We would like to show that
$R_{Marton-Sum}$ is equal to $\min_{0\leq \lambda \leq 1}T_\lambda$.

Let $\mathcal{D}$ be the union over all $p(u,v,w,x)$ of real pairs $(d_1, d_2)$ satisfying
\begin{eqnarray*}&d_1\leq I(W;Y)+I(U;Y|W)+I(V;Z|W)-I(U;V|W),\\&d_2\leq I(W;Z)+I(U;Y|W)+I(V;Z|W)-I(U;V|W).\end{eqnarray*}
We claim that this region is convex. Take two points $(d_1, d_2)$ and $(d'_1, d'_2)$ in the region. Corresponding to these are joint distributions $p(u_1,v_1,w_1,x_1)q(y_1,z_1|x_1)$ and $p(u_2,v_2,w_2,x_2)q(y_2,z_2|x_2)$. Take a uniform binary random variable $Q$ independent of all the previously defined random variables. Set $U=U_Q$, $V=V_Q$, $W=(Q,W_Q)$, $X=X_Q$, $Y=Y_Q$, $Z=Z_Q$. We will then have
\begin{eqnarray*}&I(W;Y)+I(U;Y|W)+I(V;Z|W)-I(U;V|W)=\\& I(W_Q,Q;Y_Q)+I(U_Q;Y_Q|W_Q,Q)+\\&I(V_Q;Z_Q|W_Q,Q)-I(U_Q;V_Q|W_Q,Q)\geq\\& I(W_Q;Y_Q|Q)+I(U_Q;Y_Q|W_Q,Q)+\\&I(V_Q;Z_Q|W_Q,Q)-I(U_Q;V_Q|W_Q,Q)=\\&
\frac{1}{2}\big(I(W_1;Y_1)+I(U_1;Y_1|W_1)+\\&I(V_1;Z_1|W_1)-I(U_1;V_1|W_1)\big)+\\&
\frac{1}{2}\big(I(W_2;Y_2)+I(U_2;Y_2|W_2)+\\&I(V_2;Z_2|W_2)-I(U_2;V_2|W_2)\big)\geq \\& \frac{1}{2}(d_1+d'_1).\end{eqnarray*}
Similarly,
\begin{eqnarray*}&I(W;Z)+I(U;Y|W)+I(V;Z|W)-I(U;V|W)\geq \\& \frac{1}{2}(d_2+d'_2).\end{eqnarray*}
Thus, the point $(\frac{1}{2}(d_1+d'_1),\frac{1}{2}(d_2+d'_2))$ is in the region. Thus, $\mathcal{D}$ is convex.

Next, note that the point $(R_{Marton-Sum}, R_{Marton-Sum})$ is in $\mathcal{D}$. We claim that it is a boundary point of $\mathcal{D}$. If it is an interior point, there must exist $\epsilon>0$ such that $(R_{Marton-Sum}+\epsilon, R_{Marton-Sum}+\epsilon)$ is in $\mathcal{D}$. This implies the existence of some $p(u,v,w,x)$ where
\begin{eqnarray*}&R_{Marton-Sum}+\epsilon\leq\\& I(W;Y)+I(U;Y|W)+I(V;Z|W)-I(U;V|W),\\&R_{Marton-Sum}+\epsilon\leq \\& I(W;Z)+I(U;Y|W)+I(V;Z|W)-I(U;V|W).\end{eqnarray*}
This implies that \begin{eqnarray*}&R_{Marton-Sum}+\epsilon\leq \\&\min(I(W;Y),I(W;Z))+I(U;Y|W)+\\&I(V;Z|W)-I(U;V|W)\end{eqnarray*} for some $p(u,v,w,x)$, which is a contradiction.

Using the supporting hyperplane theorem and the fact that $\mathcal{D}$ is convex and closed, one can conclude that there exists a supporting hyperplane to $\mathcal{D}$ at the boundary point $(R_{Marton-Sum}, R_{Marton-Sum})$. We claim that this supporting hyperplane must have the equation $\lambda^* d_1+(1-\lambda^*) d_2 = T(\lambda^*)$ for some $\lambda^* \in[0,1]$. The proof is as follows: any supporting hyperplane has the formula $\lambda^* d_1+(1-\lambda^*) d_2 = k$ for some real $\lambda^*$ and real $k$. We claim that $\lambda^*$ must be in $[0,1]$ and $k=T(\lambda^*)$. Assume that for instance $\lambda^*<0$. We know that $\mathcal{D}$ must be entirely contained in one of the two closed half-spaces determined by the hyperplane. Note that the points $(0,0)$, $(-\infty,0)$ and $(0,-\infty)$ are in $\mathcal{D}$ (take $p(u,v,w,x)$ satisfying $I(U;V|W)=0$ in the definition of $\mathcal{D}$). The value of  $\lambda^* d_1+(1-\lambda^*) d_2$ at these points is equal to $0$, $+\infty$ and $-\infty$ respectively. Thus, $\mathcal{D}$ cannot possibly be entirely contained in one of the two closed half-spaces determined by the hyperplane. Similarly the case $1-\lambda^*<0$ can be refuted. Therefore $\lambda^*$ must be in $[0,1]$. Since the points $(-\infty,0)$ and $(0,-\infty)$ are in $\mathcal{D}$, the half-space determined by the hyperplane that contains $\mathcal{D}$ is the one determined by the equation $\lambda^* d_1+(1-\lambda^*) d_2 \leq k$ for some $k$. Since the half-space has at least one point of $\mathcal{D}$, the value of $k$ must be equal to $\max_{(d_1, d_2)\in \mathcal{R}}\lambda^* d_1+(1-\lambda^*) d_2$. The latter is equal to $T(\lambda^*)$. Thus, the supporting hyperplane at the boundary point $(R_{Marton-Sum}, R_{Marton-Sum})$ has the equation $\lambda^* d_1+(1-\lambda^*) d_2 = T(\lambda^*)$ for some $\lambda^* \in[0,1]$.

Since $(R_{Marton-Sum}, R_{Marton-Sum})$ lies on this hyperplane, $\lambda^* R_{Marton-Sum}+(1-\lambda^*) R_{Marton-Sum} = T(\lambda^*)$ implies that $R_{Marton-Sum} = T(\lambda^*)$ for some $\lambda^* \in[0,1]$. Therefore $$\min_{0\leq \lambda \leq 1}T_\lambda\leq R_{Marton-Sum}.$$
On the other hand, for every $\lambda$, $T_\lambda\geq R_{Marton-Sum}$. Therefore
$$\min_{0\leq \lambda \leq 1}T_\lambda\geq R_{Marton-Sum}.$$
\end{proof}

\begin{proof}[Proof of Theorem \ref{Thm:EB1}] We begin by proving the \textbf{first bullet}. $p(u,y)$ is positive for all $u, y$ because there must exist some $x$ such that $p(u,x)>0$. Since the transition matrices have positive entries and $p(u,y)\geq p(u,x)q(y|x)$, $p(u,y)$ will be positive for all $y$. A similar argument proves that $p(v,z) > 0$
for all $v,z$. Next assume that $p(u,v)=0$ for some $(u,v)$. Take some $u', v'$ such that $p(u', v')>0$. Let us reduce $p(u',v')$ by $\epsilon$ and increase $p(u,v)$ by $\epsilon$. Furthermore, have $(u,v)$ mapped to the same $x$ that $(u',v')$ was mapped to; this ensures that the marginal distribution of $X$ is preserved. One can write
\begin{eqnarray*}&I(U;Y)+I(V;Z)-I(U;V)=\\&
H(Y)+H(Z)+H(UV)-H(UY)-H(VZ).\end{eqnarray*}
The only change in this expression comes from the change in $H(UV)-H(UY)-H(VZ)$. The derivative of $H(UV)$ with respect to $\epsilon$, at $\epsilon=0$, will be infinity. But the derivative of $H(UY)$ and $H(VZ)$ will be finite since $p(u,y)$, $p(u',y)$, $p(v,z)$ and $p(v',z)$ are positive for all $y$ and $z$. So, the first derivative of $H(UV)-H(UY)-H(VZ)$ with respect to $\epsilon$, at $\epsilon=0$, will be positive. This is a contradiction since $p(u,v|x)$ was assumed to maximize $I(U;Y)+I(V;Z)-I(U;V)$.

We now prove the \textbf{second bullet}. Assume that $\mathcal{U}=\{u_1, u_2, ...., u_{|\mathcal{U}|}\}$ and $\mathcal{V}=\{v_1, v_2, ...., v_{|\mathcal{V}|}\}$. Let $\pi_{i,j}=p(u_i,v_j)$ for $i=1,...,|\mathcal{U}|$, $j=1,..., |\mathcal{V}|$. From the first bullet we know that $\pi_{i,j}>0$ for all $i$ and $j$. Let $\overline{\epsilon}=\min_{i,j}\pi_{i,j}$. Take some $\epsilon \in (0, \overline{\epsilon})$. Let $x =\xi_0(u,v)$ denote the deterministic mapping from ${\cal U} \times {\cal V}$ to ${\cal X}$.

We prove the statement by contradiction. Assume that $$\overrightarrow{v_{\xi_0}}=\sum_{t=1}^{M}\alpha_t\overrightarrow{v_{\xi_t}},$$
for some mappings $\xi_t$ ($t=1,2,..,M$) distinct from $\xi_0$ and non-negative numbers $\alpha_t$ adding up to one.

Let random variables $T_{i,j}$ (for $i=1,...,|\mathcal{U}|$, $j=1,2,3,..., |\mathcal{V}|$) be $M+1$-ary random variables mutually independent of each other, and of $U,V,X,Y,Z$ satisfying:
\begin{itemize}
  \item $p(T_{i,j}=0)=1-\frac{\epsilon}{\pi_{i,j}}$,
  \item $p(T_{i,j}=1)=\frac{\epsilon}{\pi_{i,j}}\alpha_1$,
  \item $p(T_{i,j}=2)=\frac{\epsilon}{\pi_{i,j}}\alpha_2$,
  \item $p(T_{i,j}=3)=\frac{\epsilon}{\pi_{i,j}}\alpha_3$,
  \item ...
  \item $p(T_{i,j}=M)=\frac{\epsilon}{\pi_{i,j}}\alpha_M$.
\end{itemize}

Let $\widetilde{X}$ be defined as follows:
\begin{itemize}
  \item On the event that $(U,V)=(u_i,v_j)$, let $\widetilde{X}$ be equal to $\xi_{T_{i,j}}(u_i,v_j)$. In other words, if $T_{i,j}=0$, $\widetilde{X}$ is equal to $\xi_{0}(u_i,v_j)$; if $T_{i,j}=1$, $\widetilde{X}$ is equal to $\xi_{1}(u_i,v_j)$, etc.
\end{itemize}

We claim that $p(\widetilde{X}=x|U=u_i)=p(X=x|U=u_i)$ for all $i=1,2,3,...,|\mathcal{U}|$ and $x$; and similarly $p(\widetilde{X}=x|V=v_j)=p(X=x|V=v_j)$ for all $j=1,2,3,...,|\mathcal{V}|$ and $x$. This is proved in Appendix \ref{Appndx:E}. Note that the above property implies that $\widetilde{X}$ and $X$ have the same marginal distributions.

Let $\widetilde{Y}$ and $\widetilde{Z}$ be defined such that $UV(T_{i,j})_{i:1,2,..,j=1,2,..}\rightarrow \widetilde{X}\rightarrow \widetilde{Y}\widetilde{Z}$, and the conditional law of $\widetilde{y}$ and $\widetilde{z}$ given $\widetilde{x}$ is the same as $q(y,z|x)$. Here $(T_{i,j})_{i:1,2,..,j=1,2,..}$ denotes the collection of all $T_{i,j}$ for all $i$ and $j$.

Without loss of generality, let us assume $\alpha_1\neq 0$. Since the mapping $\xi_0(\cdot, \cdot)$ is not equal to $\xi_1(\cdot, \cdot)$, there must exist $(i,j)$ such that $\xi_0(u_i, v_j)\neq \xi_1(u_i, v_j)$. Let us label the input symbol $\xi_0(u_i, v_j)$ by $x_0$, and the input symbol $\xi_1(u_i, v_j)$ by $x_1$. We know that the channel is irreducible. Let us then assume that there is some $y$ such that $q(y|x_0)\neq q(y|x_1)$; the proof for the case when there is some $z$ such that $q(z|x_0)\neq q(z|x_1)$ is similar. Let $\widetilde{U}=(U, T_{i,j})$ and $\widetilde{V}=V$.

Since $p(\widetilde{X}=x|U=u)=p(X=x|U=u)$ for all $u$ and $x$, and $p(\widetilde{X}=x|V=v)=p(X=x|V=v)$ for all $v$ and $x$, we have
\begin{itemize}
  \item $I(U;\widetilde{Y})=I(U;Y)$,
  \item $I(V;\widetilde{Z})=I(V;Z)$.
\end{itemize}
Therefore $I(\widetilde{V};\widetilde{Z})=I(V;Z)$ and $I(\widetilde{U};\widetilde{Y})=I(U;Y)+I(T_{i,j};\widetilde{Y}|U)$. Furthermore since $T_{i,j}$ is independent of $U,V$, we have $I(\widetilde{U};\widetilde{V})=I(U;V)$. Therefore \small
\begin{eqnarray*}&I(\widetilde{U};\widetilde{Y})+I(\widetilde{V};\widetilde{Z})-I(\widetilde{U};\widetilde{V})-\big(I(U;Y)+I(V;Z)-I(U;V)\big)\\&
=I(T_{i,j};\widetilde{Y}|U).\end{eqnarray*}\normalsize

Since $p(u,v,x)$ was maximizing $I(U;Y)+I(V;Z)-I(U;V)$ under the fixed marginal distribution on $x$, we must have $I(T_{i,j};\widetilde{Y}|U)=0$. Therefore $I(T_{i,j};\widetilde{Y}|U=u_i)=0$ holds as well.

In Appendix \ref{Appndx:C01}, we prove that the following are true
$$p(\widetilde{X}=x_0|U=u_i, T_{i,j}=0)
\neq p(\widetilde{X}=x_0|U=u_i, T_{i,j}=1),$$
$$p(\widetilde{X}=x_1|U=u_i, T_{i,j}=0)\neq p(\widetilde{X}=x_1|U=u_i, T_{i,j}=1).$$ But for any $x\notin\{x_0,x_1\}$,
$$p(\widetilde{X}=x|U=u_i, T_{i,j}=0)= p(\widetilde{X}=x|U=u_i, T_{i,j}=1).$$

Remember that we assumed that there is some $y$ such that $q(y|x_0)\neq q(y|x_1)$. In Appendix \ref{Appndx:C02}, we show that
$$p(\widetilde{Y}=y|U=u_i, T_{i,j}=0)
\neq p(\widetilde{Y}=y|U=u_i, T_{i,j}=1).$$
This implies that $\widetilde{Y}$ and $T_{i,j}$ are not conditionally independent given $U=u_i$. Therefore $I(T_{i,j};\widetilde{Y}|U=u_i)\neq 0$ which is a contradiction.

We now prove the \textbf{third bullet}. The proof begins by noting that the definition of $h(x)$ implies that for any $(u,v,x)$, \begin{eqnarray*}&h(x)\leq \log (p(u,v))-f_u(x)-g_v(x).\end{eqnarray*} Therefore, for any $(u,v,x)$,
\begin{eqnarray*}&\log (p(u,v))\geq f_u(x)+g_v(x)+h(x).\end{eqnarray*} Thus,
\begin{eqnarray}&\log (p(u,v))\geq \max_{x}\big(f_u(x)+g_v(x)+h(x)\big).\label{eqn:appendixB1}\end{eqnarray}

Note that the first partial derivative of $H(UV)-H(UY)-H(VZ)$ with respect to $p(u,v,x)$ is proportional to \begin{eqnarray*}&-\log p(u,v) -1+\sum_{y}q(y|x)\log p(u,y)+1+\\&\sum_{z}q(z|x)\log p(v,z)+1=\\&
-\log p(u,v)+f_{u}(x)+g_{v}(x)+1.\end{eqnarray*}
Assume that the triple $(u,v,x)$ is such that $p(u,v,x)>0$. Take some arbitrary $u'$ and $v'$. Reducing $p(u,v,x)$ by a small $\epsilon$ and increasing $p(u',v',x)$ by $\epsilon$ does not affect the marginal distribution of $X$ and hence should not increase the expression $H(UV)-H(UY)-H(VZ)$. Therefore the first derivative of $H(UV)-H(UY)-H(VZ)$ with respect to $p(u,v,x)$ must be greater than or equal to the first derivative of $H(UV)-H(UY)-H(VZ)$ with respect to $p(u',v',x)$. Thus, \begin{eqnarray*}&-\log p(u,v)+f_{u}(x)+g_{v}(x)+1\geq\\& -\log p(u',v')+f_{u'}(x)+g_{v'}(x)+1.\end{eqnarray*}
In other words, for any arbitrary $u'$ and $v'$, we have
\begin{eqnarray*}&\log p(u,v)-f_{u}(x)-g_{v}(x)\leq\\& \log p(u',v')-f_{u'}(x)-g_{v'}(x).\end{eqnarray*}
Therefore
\begin{eqnarray*}&\log p(u,v)-f_{u}(x)-g_{v}(x)\leq\\& \min_{u',v'}\log p(u',v')-f_{u'}(x)-g_{v'}(x)=h(x).\end{eqnarray*}
Thus, $\log p(u,v)\leq f_{u}(x)+g_{v}(x)+h(x)$ whenever $p(u,v,x)>0$. This together with equation (\ref{eqn:appendixB1}) imply that
\begin{eqnarray*}&\log(p(u,v))=\max_x f_{u}(x)+g_{v}(x)+h(x),\end{eqnarray*}
and \begin{eqnarray*}&p(x_0|u,v)=1 \mbox{ for some } x_0\in \mathcal{X} \Rightarrow \\& x_0\in argmax_x f_{u}(x)+g_{v}(x)+h(x).\end{eqnarray*}

\end{proof}

\begin{proof}[Proof of Lemma \ref{Lemma:LemmaForMarton1}] This is a consequence of bullet one of Theorem \ref{Thm:EB1}.
\end{proof}

\begin{proof}[Proof of Theorem \ref{Thm:ThmMarton1}] From the set of pmfs $p(u,v,w,x)$ that maximize the expression $\lambda I(W;Y)+(1-\lambda)I(W;Z)+I(U;Y|W)+I(V;Z|W)-I(U;V|W)$, let $p_0(u,v,w,x)$ be the one that achieves the
largest value of $I(W;Y)+I(W;Z)$. In Appendix \ref{Appndx:B}, we prove that one can find $p(\widehat{u},\widehat{v},\widehat{w},\widehat{x})$ such that
\begin{itemize}
  \item $\lambda I(W;Y)+(1-\lambda)I(W;Z)+I(U;Y|W)+I(V;Z|W)-I(U;V|W)$ is equal to $\lambda I(\widehat{W};\widehat{Y})+(1-\lambda)I(\widehat{W};\widehat{Z})+I(\widehat{U};\widehat{Y}|\widehat{W})+I(\widehat{V};\widehat{Z}|\widehat{W})-I(\widehat{U};\widehat{V}|\widehat{W})$,
  \item $I(W;Y)+I(W;Z)$ is equal to $ I(\widehat{W};\widehat{Y})+I(\widehat{W};\widehat{Z})$,
  \item $|\mathcal{\widehat{U}}|\leq \min(|\mathcal{X}|, |\mathcal{Y}|),$
  \item $|\mathcal{\widehat{V}}|\leq \min(|\mathcal{X}|, |\mathcal{Z}|),$
  \item $|\mathcal{\widehat{W}}|\leq |\mathcal{X}|,$
  \item $H(\widehat{X}|\widehat{U}\widehat{V}\widehat{W})=0.$
\end{itemize}
Thus the constraints in the first and second bullets are satisfied by $p(\widehat{u},\widehat{v},\widehat{w},\widehat{x})$. The second and third bullets of Theorem \ref{Thm:EB1} imply that $p(\widehat{u},\widehat{v},\widehat{w},\widehat{x})$ will automatically satisfy the third and fourth bullet of Theorem \ref{Thm:ThmMarton1}. In Appendix \ref{Appndx:D}, we show that the fifth bullet of Theorem \ref{Thm:ThmMarton1} holds for any joint distribution that maximizes the expression $\lambda I(W;Y)+(1-\lambda)I(W;Z)+I(U;Y|W)+I(V;Z|W)-I(U;V|W)$, and at the same time has the largest possible value of $I(W;Y)+I(W;Z)$. Thus it must also hold for $p(\widehat{u},\widehat{v},\widehat{w},\widehat{x})$.
\end{proof}

\begin{proof}[Proof of Theorem \ref{Obs:Obs2}] It suffices to show that
\begin{align*}
&\max_{(R_0,R_1,R_2)\in
\mathcal{C}(q(y,z|x))} (\lambda_0R_0+\lambda_1R_1+\lambda_2R_2) \leq
\\&
\max \{ \max_{(R_0,R_2)\in
\mathcal{C}_{d_1}(q(y,z|x))} (\lambda_0R_0+\lambda_2R_2),\\&
\max_{(R_0,R_1)\in
\mathcal{C}_{d_2}(q(y,z|x))} (\lambda_0R_0+\lambda_1R_1)\}.
\end{align*}
The key step is to show that if $(R_0, R_1, R_2)$ is in the
capacity region of a broadcast channel, then
$(R_0+\min\{ R_1, R_2\}, R_1-\min \{R_1, R_2\}, R_2-\min \{R_1,
R_2\} )$ is also in the capacity region. Since $\lambda_0\geq \lambda_1+\lambda_2$, we then
have that $\lambda_0 (R_0+\min \{R_1,
R_2\} )+\lambda_1 (R_1-\min \{R_1,
R_2\} )+\lambda_2(R_2-\min \{R_1, R_2\})\ge \lambda_0R_0+\lambda_1R_1+\lambda_2R_2$, so at the maximum we must have $\min(R_1, R_2)=0$. One can prove this
property using the result of Willems \cite{Willems}, which shows that the \emph{maximal} probability of error capacity
region is equal to the \emph{average} probability of error capacity region. Willems's proof of his result, however, is rather involved. Instead, we
provide a simple direct proof. Consider an
arbitrary code $(M_0, M_1, M_2, X^n, \epsilon)$. We show that
\begin{align*}
&\frac{\lambda_0}{n}H(M_0)+\frac{\lambda_1}{n}H(M_1)+\frac{\lambda_2}{n}H(M_2)-O(\epsilon)\leq\\& \max (\max_{(R_0,R_2)\in
\mathcal{C}_{d_1}(q(y,z|x))}\lambda_0R_0+\lambda_2R_2,\\&
\max_{(R_0,R_1)\in
\mathcal{C}_{d_2}(q(y,z|x))}\lambda_0R_0+\lambda_1R_1),
\end{align*}
where $O(\epsilon)$ denotes a constant (depending only on $|\mathcal{X}|$, $|\mathcal{Y}|$, $|\mathcal{Z}|$) times $\epsilon$. 

Assume without loss of generality that $H(M_2)\leq H(M_1)$, i.e.
$R_2\leq R_1$. Let $\widehat{W}=M_0M_2$,
$\widehat{X}=X^n$, $\widehat{Y}=Y^n$, $\widehat{Z}=Z^n$. Note that
$q(\widehat{y}, \widehat{z}|\widehat{x})$ is the $n$-fold version of
$q(y,z|x)$. Let us look at $\mathcal{C}_{d_1}(q(\widehat{y},
\widehat{z}|\widehat{x}))$, evaluated at the joint pmf
$p(\widehat{w}, \widehat{x})$:
\[
  \begin{array}{lll}
    \widehat{R_0}&\leq I(\widehat{W};\widehat{Z}), \\
    \widehat{R_1}&\leq I(\widehat{X};\widehat{Y}|\widehat{W}), \\
    \widehat{R_0}+\widehat{R_1}&\leq
I(\widehat{X};\widehat{Y}).\\
  \end{array}
\]
Note that, by Fano's inequality,
\begin{align*}
&I(\widehat{W};\widehat{Z})=I(M_0M_2;Z^n)=H(M_0)+H(M_2)-O(n\epsilon),\\
&I(\widehat{X};\widehat{Y}|\widehat{W})=I(X^n;Y^n|M_0M_2)=
H(M_1)-O(n\epsilon),\\&I(\widehat{X};\widehat{Y})= I(X^n;Y^n)\ge
H(M_0)+H(M_1)-O(n\epsilon).
\end{align*}
 Therefore
$\widehat{R_0}=H(M_0)+H(M_2)-O(n\epsilon)=n(R_0+R_2)-O(n\epsilon)$
and $\widehat{R_1}=H(M_1)-H(M_2)=n(R_1-R_2)-O(n\epsilon)$ is in
$\mathcal{C}_{d_1}(q(\widehat{y}, \widehat{z}|\widehat{x}))$. Since
$q(\widehat{y}, \widehat{z}|\widehat{x})$ is the $n$-fold version of
$q(y,z|x)$ and $\mathcal{C}_{d_1}(q(\widehat{y},
\widehat{z}|\widehat{x}))$ is the degraded message set capacity
region for $q(\widehat{y}, \widehat{z}|\widehat{x})$, we must have:
$\mathcal{C}_{d_1}(q(\widehat{y}, \widehat{z}|\widehat{x}))=n\cdot
\mathcal{C}_{d_1}(q(y,z|x)),$ where the multiplication here is pointwise. Thus, $(\frac{\widehat{R_0}}{n}, \frac{\widehat{R_1}}{n})\in
\mathcal{C}_{d_1}(q(y,z|x))$. We can complete the proof by
letting $\epsilon \to 0$, and conclude that $(R_0+R_2,
R_1-R_2, 0)\in \mathcal{C}_{d_1}(q(y,z|x))$, and thus also in the capacity region.
\end{proof}

\begin{proof}[Proof of Theorem \ref{Thm:Thm2}]
Consider a natural number $n$, and define the super symbols
$\tilde{X}=X_1X_2...X_{n}$, $\tilde{Y}=Y_1Y_2...Y_{n}$,
$\tilde{Z}=Z_1Z_2...Z_{n}$ representing $n$-inputs and $n$-outputs of
the product broadcast channel
$$q^{n}(y_1y_2...y_{n}, z_1z_2...z_{n}|x_1x_2...x_{n})=\prod_{i=1}^{n}q(y_i, z_i|x_i).$$
Since the capacity region of the product channel $q^{n}(\tilde{y},
\tilde{z}|\tilde{x})$ is $n$ times the capacity region of $q(y,
z|x)$, we have $\frac{1}{n}\mathcal{C}_{M}(q^{n}(y_1y_2...y_{n},
z_1z_2...z_{n}|x_1x_2...x_{n}))\subset \mathcal{C}(q(y, z|x))$.
Given an \emph{arbitrary} joint pmf $p(u^n,v^n,w^n,x^n)$,
one can then show that the following region is an inner bound to
$\mathcal{C}(q(y, z|x))$:\small
  \begin{align}
  R_0, R_1, R_2&\geq 0,\nonumber\\
    R_0+R_1&\leq \frac{1}{n}I(U^nW^n;Y^n),\label{eqna1} \\
    R_0+R_2&\leq \frac{1}{n}I(V^nW^n;Z^n),\label{eqna2} \\
    R_0+R_1+R_2&\leq
\frac{1}{n}\big[I(U^nW^n;Y^n)+I(V^n;Z^n|W^n)\nonumber\\&\quad\quad\quad-I(U^n;V^n|W^n)\big],\label{eqna3}\\
 R_0+R_1+R_2&\leq
\frac{1}{n}\big[I(U^n;Y^n|W^n)+I(V^nW^n;Z^n)\nonumber\\&\quad\quad\quad-I(U^n;V^n|W^n)\big], \label{eqna4}\\
 2R_0+R_1+R_2&\leq
\frac{1}{n}\big[I(U^nW^n;Y^n)+I(V^nW^n;Z^n)\nonumber\\&\quad\quad\quad-I(U^n;V^n|W^n)\big],\label{eqna5}
  \end{align}\normalsize
where $U^n,V^n,W^n,X^n,Y^n,Z^n$ are distributed according to
$p(u^n,v^n,w^n,x^n)q(y^n,z^n|x^n)$. Clearly if we assume that
$(U^n,V^n,W^n,X^n)$ is $n$ i.i.d. copies of $p(u,v,w,x)$ we get back
the one-letter version of Marton's inner bound.

Assume that
\begin{align*}&p(u^n,v^n,w^n)=\prod_{i=1}^nr(u_i,v_i,w_i).\end{align*} Note that $U_i,V_i,W_i$ are i.i.d. copies of $(U,V,W)$ distributed
according to $r(u,v,w)$. We further use the given conditional law
$r(x|u_1,v_1,w_1,u_2,v_2,w_2)$ to define the joint distribution of
$X^n$ given $U^n, V^n, W^n$ as
\begin{align*}&p(x_2^n|u^n,v^n,w^n)=\prod_{i=2}^n r(x_i|u_{i-1},v_{i-1},w_{i-1},u_i,v_i,w_i),
\\&X_1=constant.\end{align*}
We then have
\begin{align*}&I(U^nW^n;Y^n)=H(U^nW^n)-H(U^nW^n|Y^n)=\\&\sum_{i=1}^nH(U_iW_i)-H(U_iW_i|U^{i-1}W^{i-1}Y^n)
=\\&\sum_{i=1}^nI(U_iW_i;U^{i-1}W^{i-1}Y^n)\geq
\sum_{i=2}^nI(U_iW_i;U_{i-1}W_{i-1}Y_iY_{i-1})\\&=(n-1)I(U_2W_2;Y_1Y_2U_1W_1).
\end{align*}
Similarly $I(V^nW^n;Z^n)\geq (n-1)I(V_2W_2;V_1W_1Z_1Z_2)$. Next,
note that
\begin{align*}&I(V^n;Z^n|W^n)=H(V^n|W^n)-H(V^n|W^nZ^n)=\\&\sum_{i=1}^nH(V_i|W_i)-H(V_i|V^{i-1}W^nZ^n)
=\\&\sum_{i=1}^nI(V_i;V^{i-1}W^nZ^n|W_i)\geq\\&
\sum_{i=2}^nI(V_i;V_{i-1}W_{i-1}Z_iZ_{i-1}|W_i)=\\&(n-1)I(V_2;V_{1}W_{1}Z_2Z_1|W_2).
\end{align*}
Similarly, $I(U^n;Y^n|W^n)\geq (n-1)I(U_2;Y_1Y_2U_1W_1|W_2).$
Lastly, note that $I(U^n;V^n|W^n)=n\cdot I(U;V|W).$ We obtain the
desired result by substituting these values into equations
(\ref{eqna1})-(\ref{eqna5}), and letting $n\rightarrow \infty$.
\end{proof}

\section*{Acknowledgment}
This research was done when Amin Gohari was a graduate student at UC Berkeley. The research was partially supported by NSF grants CCF-0500234, CCF-0635372, CNS-0627161,
CNS-0910702, by the NSF Science \& Technology Center
grant CCF-0939370, ``Science of Information", and ARO MURI grant
W911NF-08-1-0233 ``Tools for the Analysis and Design of Complex Multi-Scale Networks."

\appendices
\section{}\label{Appndx:B} Suppose $p_0(u,v,w,x)$ is a joint distribution that maximizes $\lambda I(W;Y)+(1-\lambda)I(W;Z)+I(U;Y|W)+I(V;Z|W)-I(U;V|W)$, and among all such joint distributions has the largest value of $I(W;Y)+I(W;Z)$. In this appendix, we prove that one can find $p(\widehat{u},\widehat{v},\widehat{w},\widehat{x})$ such that
\begin{itemize}
  \item $\lambda I(W;Y)+(1-\lambda)I(W;Z)+I(U;Y|W)+I(V;Z|W)-I(U;V|W)$ is equal to $\lambda I(\widehat{W};\widehat{Y})+(1-\lambda)I(\widehat{W};\widehat{Z})+I(\widehat{U};\widehat{Y}|\widehat{W})+I(\widehat{V};\widehat{Z}|\widehat{W})-I(\widehat{U};\widehat{V}|\widehat{W})$,
  \item $I(W;Y)+I(W;Z)$ is equal to $ I(\widehat{W};\widehat{Y})+I(\widehat{W};\widehat{Z})$,
  \item $|\mathcal{\widehat{U}}|\leq \min(|\mathcal{X}|, |\mathcal{Y}|),$
  \item $|\mathcal{\widehat{V}}|\leq \min(|\mathcal{X}|, |\mathcal{Z}|),$
  \item $|\mathcal{\widehat{W}}|\leq |\mathcal{X}|,$
  \item $H(\widehat{X}|\widehat{U}\widehat{V}\widehat{W})=0$.
\end{itemize}
We begin by reducing the cardinality of $W$. Assume that $|\mathcal{W}|> |\mathcal{X}|$ and $p(w)\neq 0$ for all $w$. There must therefore exists a function $L:\mathcal{W}\rightarrow \mathbb{R}$ where
$$\mathbb{E}[L(W)|X]=0,$$
$$\exists w: p(w)\neq 0,\ \ L(w)\neq 0.$$
Let us perturb $p_0(u,v,w,x)$ along $L$ as follows:
\begin{flalign*}&p_{\epsilon}(u,v,w,x,y,z)=
p_0(u,v,w,x,y,z)\cdot [1+\epsilon
L(w)],\end{flalign*}
where $\epsilon$ is a real number in some interval
$[-\overline{\epsilon}_1, \overline{\epsilon}_2]$ for some positive reals $\overline{\epsilon}_1$ and $\overline{\epsilon}_2$.

Consider the expression $\lambda I(W;Y)+(1-\lambda)I(W;Z)+I(U;Y|W)+I(V;Z|W)-I(U;V|W)$ at $p_{\epsilon}(u,v,w,x,y,z)$. It can be verified that the expression is a linear function of $\epsilon$ under this perturbation. Since a maximum of this expression occurs at $\epsilon=0$, which is a point strictly inside the interval $[-\overline{\epsilon}_1, \overline{\epsilon}_2]$, it must be the case that this expression is a constant function of $\epsilon$. Next consider the expression $I(W;Y)+I(W;Z)$ at $p_{\epsilon}(u,v,w,x,y,z)$. It can be verified that the expression is a linear function of $\epsilon$ under this perturbation. Note that $p_0(u,v,w,x)$ is a joint distribution that has the largest value of $I(W;Y)+I(W;Z)$ among all joint distributions that maximize $\lambda I(W;Y)+(1-\lambda)I(W;Z)+I(U;Y|W)+I(V;Z|W)-I(U;V|W)$. Thus a maximum of $I(W;Y)+I(W;Z)$ occurs at $\epsilon=0$, which is a point strictly inside the interval $[-\overline{\epsilon}_1, \overline{\epsilon}_2]$. But this can only happen when $I(W;Y)+I(W;Z)$ is a constant function of $\epsilon$. Now, taking $\epsilon=-\overline{\epsilon}_1$ or $\epsilon=\overline{\epsilon}_2$ gives us a joint distribution with the same values of $\lambda I(W;Y)+(1-\lambda)I(W;Z)+I(U;Y|W)+I(V;Z|W)-I(U;V|W)$ and $I(W;Y)+I(W;Z)$, but with a smaller support on $\mathcal{W}$. Using this argument, one can reduce the cardinality of $W$ to $|\mathcal{X}|$.

Next, we show how one can reduce the cardinality of $U$ to find $p(\widehat{u},\widehat{v},\widehat{w},\widehat{x})$ such that
\begin{itemize}
  \item $\lambda I(W;Y)+(1-\lambda)I(W;Z)+I(U;Y|W)+I(V;Z|W)-I(U;V|W)$ is equal to $\lambda I(\widehat{W};\widehat{Y})+(1-\lambda)I(\widehat{W};\widehat{Z})+I(\widehat{U};\widehat{Y}|\widehat{W})+I(\widehat{V};\widehat{Z}|\widehat{W})-I(\widehat{U};\widehat{V}|\widehat{W})$,
  \item $I(W;Y)+I(W;Z)$ is equal to $ I(\widehat{W};\widehat{Y})+I(\widehat{W};\widehat{Z})$,
  \item $|\mathcal{\widehat{U}}|\leq \min(|\mathcal{X}|, |\mathcal{Y}|),$
  \item $|\mathcal{\widehat{W}}|\leq |\mathcal{X}|.$
\end{itemize}
We can repeat a similar procedure to impose the constraint $|\mathcal{\widehat{V}}|\leq \min(|\mathcal{X}|, |\mathcal{Z}|)$. Imposing the extra constraint $H(\widehat{X}|\widehat{U}\widehat{V}\widehat{W})=0$ will be discussed at the end.

If $|\mathcal{X}|\leq |\mathcal{Y}|$, establishing the cardinality bound of $|\mathcal{X}|$ on $U$ suffices. This cardinality bound is proved in Theorem 1 of \cite{EvaluationMarton}. This cardinality bound can be shown using perturbations of the type $L:\mathcal{U} \times
\mathcal{W} \rightarrow \mathbb{R}$ where
$$\mathbb{E}[L(U,W)|WX]=0.$$
Note that these perturbations preserve the marginal distribution of $p(w,x)$, and thus also $I(W;Y)+I(W;Z)$. The interesting case is therefore when $|\mathcal{X}|> |\mathcal{Y}|$. Assume that $|\mathcal{U}|> |\mathcal{Y}|.$ If for every $w\in \mathcal{W}$, $p(u|w)\neq 0$ for at most $|\mathcal{Y}|$ elements $u$, we are done, since we can relabel the elements in the range of $U$ to ensure that only an alphabet of size at most $|{\cal Y}|$ is used, without affecting any of the mutual information terms in the expression of interest. There must therefore exists a function $L:\mathcal{U} \times
\mathcal{W} \rightarrow \mathbb{R}$ where
$$\mathbb{E}[L(U,W)|WY]=0,$$
$$\exists (u,w): p_0(u,w)\neq 0,\ \ L(u,w)\neq 0.$$
 Let us perturb $p_0(u,v,w,x)$ along the random variable $L:\mathcal{U} \times
\mathcal{W} \rightarrow \mathbb{R}$. Random variables $\widetilde{U},
\widetilde{V}, \widetilde{W}, \widetilde{X}, \widetilde{Y},
\widetilde{Z}$ are distributed according to
$p_{\epsilon}(\widetilde{u}, \widetilde{v}, \widetilde{w},
\widetilde{x}, \widetilde{y}, \widetilde{z})$ defined as follows
\begin{flalign*}&p_{\epsilon}(\widetilde{u}, \widetilde{v},
\widetilde{w}, \widetilde{x}, \widetilde{y}, \widetilde{z})=
p_0(\widetilde{u}, \widetilde{v}, \widetilde{w}, \widetilde{x},
\widetilde{y}, \widetilde{z})\cdot [1+\epsilon
L(\widetilde{u},\widetilde{w})],\end{flalign*}
where $\epsilon$ is a real number in some interval
$[-\overline{\epsilon}_1, \overline{\epsilon}_2]$.

The first derivative of $\lambda I(W;Y)+(1-\lambda)I(W;Z)+I(U;Y|W)+I(V;Z|W)-I(U;V|W)$ with respect to $\epsilon$, at $\epsilon=0$ should be zero. Since \begin{eqnarray*}&\lambda I(W;Y)+(1-\lambda)I(W;Z)+I(U;Y|W)\\&+I(V;Z|W)-I(U;V|W)=\\&\lambda \big(H(W)+H(Y)-H(WY)\big)\\&+(1-\lambda)\big(H(W)+H(Z)-H(WZ)\big)+\\& H(YW)+H(ZW)-H(UYW)\\&-H(VZW)+H(UVW)-H(W),\end{eqnarray*}
we will have:
\begin{eqnarray*}&
\lambda \big(H_L(W)+H_L(Y)-H_L(WY)\big)+\\&(1-\lambda)\big(H_L(W)+H_L(Z)-H_L(WZ)\big)\\&+H_L(YW)+H_L(ZW)-H_L(UYW)\\&-H_L(VZW)+H_L(UVW)-H_L(W)=0,
\end{eqnarray*}
where $H_L(W)$ denotes
$\sum_w E[L|W=w] p(w) \log \frac{1}{p(w)}$ and similarly for the other terms.
Using Lemma 2 of \cite{EvaluationMarton}, we have:
\begin{eqnarray*}&\lambda I(\widetilde{W};\widetilde{Y})+(1-\lambda)I(\widetilde{W};\widetilde{Z})+I(\widetilde{U};\widetilde{Y}|\widetilde{W})\\&+I(\widetilde{V};\widetilde{Z}|\widetilde{W})
-I(\widetilde{U};\widetilde{V}|\widetilde{W})=\\&\lambda I(W;Y)+(1-\lambda)I(W;Z)+I(U;Y|W)\\&+I(V;Z|W)-I(U;V|W)+\\&\lambda \big(-\mathbb{E}\big[r\big(\epsilon\cdot
\mathbb{E}[L|W]\big)\big]-\mathbb{E}\big[r\big(\epsilon\cdot
\mathbb{E}[L|Y]\big)\big]\\&+\mathbb{E}\big[r\big(\epsilon\cdot
\mathbb{E}[L|WY]\big)\big]\big)+\\&(1-\lambda)\big(-\mathbb{E}\big[r\big(\epsilon\cdot
\mathbb{E}[L|W]\big)\big]-\mathbb{E}\big[r\big(\epsilon\cdot
\mathbb{E}[L|Z]\big)\big]\\&+\mathbb{E}\big[r\big(\epsilon\cdot
\mathbb{E}[L|WZ]\big)\big]\big)+\\&-\mathbb{E}\big[r\big(\epsilon\cdot
\mathbb{E}[L|YW]\big)\big]-\mathbb{E}\big[r\big(\epsilon\cdot
\mathbb{E}[L|ZW]\big)\big]\\&+
\mathbb{E}\big[r\big(\epsilon\cdot
\mathbb{E}[L|UYW]\big)\big]+\mathbb{E}\big[r\big(\epsilon\cdot
\mathbb{E}[L|VWZ]\big)\big]\\&-\mathbb{E}\big[r\big(\epsilon\cdot
\mathbb{E}[L|UVW]\big)\big]+\mathbb{E}\big[r\big(\epsilon\cdot
\mathbb{E}[L|W]\big)\big],\end{eqnarray*}
 where $r(x)=(1+x)\log(1+x)$. Since $\mathbb{E}[L(U,W)|WY]=0$, and $L$ is a function of $UW$, we have:
\begin{eqnarray*}&\lambda I(\widetilde{W};\widetilde{Y})+(1-\lambda)I(\widetilde{W};\widetilde{Z})+I(\widetilde{U};\widetilde{Y}|\widetilde{W})\\&+I(\widetilde{V};\widetilde{Z}|\widetilde{W})
-I(\widetilde{U};\widetilde{V}|\widetilde{W})=\\&\lambda I(W;Y)+(1-\lambda)I(W;Z)+I(U;Y|W)\\&+I(V;Z|W)-I(U;V|W)+\\&(1-\lambda)\big(-\mathbb{E}\big[r\big(\epsilon\cdot
\mathbb{E}[L|Z]\big)\big]+\mathbb{E}\big[r\big(\epsilon\cdot
\mathbb{E}[L|WZ]\big)\big]\big)\\&-\mathbb{E}\big[r\big(\epsilon\cdot
\mathbb{E}[L|ZW]\big)\big]+\mathbb{E}\big[r\big(\epsilon\cdot
\mathbb{E}[L|VWZ]\big)\big].\end{eqnarray*}
Since $r(x)=(1+x)\log(1+x)$ is a convex function, we have
\begin{eqnarray*}&-\mathbb{E}\big[r\big(\epsilon\cdot
\mathbb{E}[L|Z]\big)\big]+\mathbb{E}\big[r\big(\epsilon\cdot
\mathbb{E}[L|WZ]\big)\big]\geq 0,\\&-\mathbb{E}\big[r\big(\epsilon\cdot
\mathbb{E}[L|WZ]\big)\big]+\mathbb{E}\big[r\big(\epsilon\cdot
\mathbb{E}[L|VWZ]\big)\big]\geq 0.\end{eqnarray*}
 Therefore for any $\epsilon\in [-\overline{\epsilon}_1, \overline{\epsilon}_2]$, we have
\begin{eqnarray*}&\lambda I(\widetilde{W};\widetilde{Y})+(1-\lambda)I(\widetilde{W};\widetilde{Z})+I(\widetilde{U};\widetilde{Y}|\widetilde{W})\\&+I(\widetilde{V};\widetilde{Z}|\widetilde{W})
-I(\widetilde{U};\widetilde{V}|\widetilde{W})\geq\\&\lambda I(W;Y)+(1-\lambda)I(W;Z)+I(U;Y|W)\\&+I(V;Z|W)-I(U;V|W).\end{eqnarray*}
This implies that $\lambda I(\widetilde{W};\widetilde{Y})+(1-\lambda)I(\widetilde{W};\widetilde{Z})+I(\widetilde{U};\widetilde{Y}|\widetilde{W})+I(\widetilde{V};\widetilde{Z}|\widetilde{W})- I(\widetilde{U}; \widetilde{V}|\widetilde{W})$ is a constant function of $\epsilon$. The maximum of $I(\widetilde{W};\widetilde{Y})+I(\widetilde{W};\widetilde{Z})$ as a function of $\epsilon$ occurs at $\epsilon=0$. Therefore
\begin{eqnarray*}&
I_L(W;Y)+I_L(W;Z)=0,
\end{eqnarray*}
where $I_L(W;Y)$ denotes $\sum_{u,w,y}p(u,w,y)L(u,w)\log\frac{p(w,y)}{p(w)p(y)}$, etc.\ (see Lemma 2 of \cite{EvaluationMarton}).

Using Lemma 2 of \cite{EvaluationMarton}, one can observe that
$[I(\widetilde{W};\widetilde{Y}) +
I(\widetilde{W};\widetilde{Z})] - [I(W;Y) + I(W;Z)]$
equals \begin{eqnarray*}&-\mathbb{E}\big[r\big(\epsilon\cdot
\mathbb{E}[L|Z]\big)\big]+\mathbb{E}\big[r\big(\epsilon\cdot
\mathbb{E}[L|WZ]\big)\big]\geq 0.\end{eqnarray*}
  But this can only happen when $I(\widetilde{W};\widetilde{Y})+I(\widetilde{W};\widetilde{Z})$ is a constant function of $\epsilon$. Now, taking $\epsilon=-\overline{\epsilon}_1$ or $\epsilon=\overline{\epsilon}_2$ gives us auxiliary random variable $(\widetilde{U}, \widetilde{W})$ with smaller support than that of $(U,W)$. We can continue this process as long as there exists  $w\in \mathcal{W}$, such that $p(u|w)\neq 0$ for more than $|\mathcal{Y}|$ elements $u$.

It remains to show that one can impose the extra constraint $H(\widehat{X}|\widehat{U}\widehat{V}\widehat{W})=0$.
Fix $p({u},{v},{w})$. Consider the expressions $\lambda I({W};{Y})+(1-\lambda)I({W};{Z})+I({U};{Y}|{W})+I({V};{Z}|{W})-I({U};{V}|{W})$
and $I({W};{Y})+I({W};{Z})$ as functions of the conditional distribution of $r({x}|{u},{v},{w})$. We know that for instance that the former expression is maximized at $p({x}|{u},{v},{w})$. Further, the extreme points of the corresponding region for $r({x}|{u},{v},{w})$ satisfy $r({x}|{u},{v},{w})\in \{0,1\}$. Both of the expressions are convex functions of $r({x}|{u},{v},{w})$. This is
because $I({W};{Y})$ is convex in the conditional distribution
$p({y}|{w})$; similarly $I({U};{Y}|{W}={w})$ is convex for any fixed value of
${w}$. The term $I({U};{V}|{W})$ that appears with a negative sign is
constant since the joint distribution of $p({u},{v},{w})$ is fixed.

We can express $p({x}|{u},{v},{w})$ as a linear combination of the extreme points of the region formed by all conditional distributions $r({x}|{u},{v},{w})$. Since the maximum of
$\lambda I({W};{Y})+(1-\lambda)I({W};{Z})+I({U};{Y}|{W})+I({V};{Z}|{W})-I({U};{V}|{W})$
occurs at some $p({x}|{u},{v},{w})$ and the expression is convex in $r({x}|{u},{v},{w})$, the maximum must also occur at all the extreme points showing up in the linear combination. One can use the convexity of $I({W};{Y})+I({W};{Z})$ in $r({x}|{u},{v},{w})$ to show that the value of $I({W};{Y})+I({W};{Z})$ at all these extreme points must be also equal to that at $p({x}|{u},{v},{w})$.

\section{}\label{Appndx:E}
In this Appendix we complete the proof of Theorem \ref{Thm:EB1} by proving that $p(\widetilde{X}=x|U=u_i)=p(X=x|U=u_i)$ for all $i=1,2,3,...,|\mathcal{U}|$ and $x$; and similarly $p(\widetilde{X}=x|V=v_j)=p(X=x|V=v_j)$ for all $j=1,2,3,...,|\mathcal{V}|$ and $x$.

Note that \small
\begin{eqnarray*}&p(\widetilde{X}=x|U=u_i)=\\&\sum_{j}p(V=v_j|U=u_i)p(\widetilde{X}=x|U=u_i,V=v_j)=\\&\sum_{j}p(V=v_j|U=u_i)\sum_{k=0}^M p(T_{i,j}=k)\textbf{1}[\xi_{k}(u_i,v_j)=x]=\\&
\sum_{j}p(V=v_j|U=u_i)(1-\frac{\epsilon}{\pi_{i,j}})\textbf{1}[\xi_{0}(u_i,v_j)=x]+\\&
\sum_{j}p(V=v_j|U=u_i)\sum_{k=1}^M \frac{\epsilon}{\pi_{i,j}}\alpha_k\textbf{1}[\xi_{k}(u_i,v_j)=x]=\\&
\sum_{j}p(V=v_j|U=u_i)(\frac{\pi_{i,j}-\epsilon}{\pi_{i,j}})\textbf{1}[\xi_{0}(u_i,v_j)=x]+\\&
\sum_{k=1}^M\sum_{j}p(V=v_j|U=u_i)\frac{\epsilon}{\pi_{i,j}}\alpha_k\textbf{1}[\xi_{k}(u_i,v_j)=x].\\&
\end{eqnarray*}\normalsize
Note that $p(V=v_j|U=u_i)=\frac{p(V=v_j,U=u_i)}{p(U=u_i)}=\frac{\pi_{i,j}}{p(U=u_i)}$. Therefore
\begin{eqnarray*}&p(\widetilde{X}=x|U=u_i)=\\&
\sum_{j}\frac{\pi_{i,j}-\epsilon}{p(U=u_i)}\textbf{1}[\xi_{0}(u_i,v_j)=x]+\\&
\sum_{k=1}^M\sum_{j}\frac{\epsilon}{p(U=u_i)}\alpha_k\textbf{1}[\xi_{k}(u_i,v_j)=x]=\\&
\sum_{j}\frac{\pi_{i,j}}{p(U=u_i)}\textbf{1}[\xi_{0}(u_i,v_j)=x]\\&-\frac{\epsilon}{p(U=u_i)}\sum_{j}\textbf{1}[\xi_{0}(u_i,v_j)=x]+\\&
\frac{\epsilon}{p(U=u_i)}\sum_{k=1}^M\alpha_k\sum_{j}\textbf{1}[\xi_{k}(u_i,v_j)=x].
\end{eqnarray*}
But since
$$\overrightarrow{v_{\xi_0}}=\sum_{t=1}^{M}\alpha_t\overrightarrow{v_{\xi_t}},$$
the profiles of the $i^{th}$ rows must also satisfy the same property:
$$\sum_{j}\textbf{1}[\xi_{0}(u_i,v_j)=x]=\sum_{k=1}^M\alpha_k\sum_{j}\textbf{1}[\xi_{k}(u_i,v_j)=x].$$
Therefore,
\begin{eqnarray*}&p(\widetilde{X}=x|U=u_i)=\\&
\sum_{j}\frac{\pi_{i,j}}{p(U=u_i)}\textbf{1}[\xi_{0}(u_i,v_j)=x]+0-0=\\&
\sum_{j}\frac{\pi_{i,j}}{p(U=u_i)}\textbf{1}[\xi_{0}(u_i,v_j)=x]=p(X=x|U=u_i).
\end{eqnarray*}

The equation $p(\widetilde{X}=x|V=v_j)=p(X=x|V=v_j)$ for all $j=1,2,3,...,|\mathcal{V}|$ and $x$ can be proved similarly.

\section{}\label{Appndx:C01}
Note that \small\begin{eqnarray*}&p(\widetilde{X}=x_0|U=u_i, T_{i,j}=0)=\\&
p(\widetilde{X}=x_0|U=u_i, T_{i,j}=0,V=v_j)p(V=v_j|U=u_i, T_{i,j}=0)\\&+\\&
p(\widetilde{X}=x_0|U=u_i, T_{i,j}=0,V \neq v_j)p(V \neq v_j|U=u_i, T_{i,j}=0).\end{eqnarray*}\normalsize
Since under the event $(U,V)=(u_i,v_j)$ and $T_{i,j}=0$, $\widetilde{X}$ is equal to $x_0$, the term $p(\widetilde{X}=x_0|U=u_i, T_{i,j}=0,V=v_j)$ will be equal to one. Since $(U,V)$ is independent of $T_{i,j}$, we have $$p(V=v_j|U=u_i, T_{i,j}=0)=p(V=v_j|U=u_i),$$$$p(V \neq v_j|U=u_i, T_{i,j}=0)=p(V \neq v_j|U=u_i).$$ Lastly $p(\widetilde{X}=x_0|U=u_i, T_{i,j}=0,V \neq v_j)$ is equal to $p(\widetilde{X}=x_0|U=u_i, V \neq v_j)$ since under the event that $(U=u_i, V \neq v_j)$, $\widetilde{X}$ will be independent of $T_{i,j}$ (note that $T_{\cdot, \cdot}$ random variables were mutually independent of each other). Therefore,
\begin{eqnarray}&p(\widetilde{X}=x_0|U=u_i, T_{i,j}=0)=\label{eqn:AA1}\\&
p(V=v_j|U=u_i)+\nonumber\\&
p(\widetilde{X}=x_0|U=u_i, V \neq v_j)p(V \neq v_j|U=u_i).\nonumber\end{eqnarray}
Next, note that \small\begin{align*}&p(\widetilde{X}=x_0|U=u_i, T_{i,j}=1)=\\&
p(\widetilde{X}=x_0|U=u_i, T_{i,j}=1,V=v_j)p(V=v_j|U=u_i, T_{i,j}=1)+\\&
p(\widetilde{X}=x_0|U=u_i, T_{i,j}=1,V \neq v_j)p(V \neq v_j|U=u_i, T_{i,j}=1).\end{align*}\normalsize
Since under the event $(U,V)=(u_i,v_j)$ and $T_{i,j}=1$, $\widetilde{X}$ is equal to $x_1$, the term $p(\widetilde{X}=x_0|U=u_i, T_{i,j}=1,V=v_j)$ will be equal to zero. Following an argument like above, one can show that
\begin{eqnarray}&p(\widetilde{X}=x_0|U=u_i, T_{i,j}=1)=\label{eqn:AA2}\\&
0+p(\widetilde{X}=x_0|U=u_i, V \neq v_j)p(V \neq v_j|U=u_i).\nonumber\end{eqnarray}
Comparing equations (\ref{eqn:AA1}) and (\ref{eqn:AA2}), and noting that $p(V=v_j|U=u_i)>0$, we conclude that
$$p(\widetilde{X}=x_0|U=u_i, T_{i,j}=0)
\neq p(\widetilde{X}=x_0|U=u_i, T_{i,j}=1).$$

The proof for $$p(\widetilde{X}=x_1|U=u_i, T_{i,j}=0)
\neq p(\widetilde{X}=x_1|U=u_i, T_{i,j}=1)$$
 is similar.

It remains to show that for any $x\notin\{x_0,x_1\}$, $$p(\widetilde{X}=x|U=u_i, T_{i,j}=0)
= p(\widetilde{X}=x|U=u_i, T_{i,j}=1).$$
Note that
\small\begin{align*}&p(\widetilde{X}=x|U=u_i, T_{i,j}=1)=\\&
p(\widetilde{X}=x|U=u_i, T_{i,j}=1,V=v_j)p(V=v_j|U=u_i, T_{i,j}=1)+\\&
p(\widetilde{X}=x|U=u_i, T_{i,j}=1,V \neq v_j)p(V \neq v_j|U=u_i, T_{i,j}=1)=\\&
0+p(\widetilde{X}=x|U=u_i, V \neq v_j)p(V \neq v_j|U=u_i)=\\&p(\widetilde{X}=x|U=u_i, T_{i,j}=0).\end{align*}
\normalsize
\section{}\label{Appndx:C02}
We prove the statement by contradiction. Assume that
$$p(\widetilde{Y}=y|U=u_i, T_{i,j}=0)
=p(\widetilde{Y}=y|U=u_i, T_{i,j}=1).$$

We have
\begin{align*}&p(\widetilde{Y}=y|U=u_i, T_{i,j}=0)=\\&
p(\widetilde{Y}=y|U=u_i, T_{i,j}=0, \widetilde{X}=x_0)p(\widetilde{X}=x_0|U=u_i, T_{i,j}=0)+\\&
p(\widetilde{Y}=y|U=u_i, T_{i,j}=0, \widetilde{X}=x_1)p(\widetilde{X}=x_1|U=u_i, T_{i,j}=0)+\\&
\sum_{x\in \mathcal{X}, x\notin \{x_0,x_1\}}\big( p(\widetilde{Y}=y|U=u_i, T_{i,j}=0, \widetilde{X}=x)\times\\&\quad\quad\quad\quad\quad\quad p(\widetilde{X}=x|U=u_i, T_{i,j}=0)\big)=\\&
p(\widetilde{Y}=y|\widetilde{X}=x_0)p(\widetilde{X}=x_0|U=u_i, T_{i,j}=0)+\\&
p(\widetilde{Y}=y|\widetilde{X}=x_1)p(\widetilde{X}=x_1|U=u_i, T_{i,j}=0)+\\&
\sum_{x \mathcal{X}, x\notin \{x_0,x_1\}}\big( p(\widetilde{Y}=y|\widetilde{X}=x)p(\widetilde{X}=x|U=u_i, T_{i,j}=0)\big).\end{align*}
Similarly,
\begin{align*}&p(\widetilde{Y}=y|U=u_i, T_{i,j}=1)=\\&
p(\widetilde{Y}=y|\widetilde{X}=x_0)p(\widetilde{X}=x_0|U=u_i, T_{i,j}=1)+\\&
p(\widetilde{Y}=y|\widetilde{X}=x_1)p(\widetilde{X}=x_1|U=u_i, T_{i,j}=1)+\\&
\sum_{x \mathcal{X}, x\notin \{x_0,x_1\}}\big( p(\widetilde{Y}=y|\widetilde{X}=x)p(\widetilde{X}=x|U=u_i, T_{i,j}=1)\big).\end{align*}
It was shown in Appendix \ref{Appndx:C01} that
$$p(\widetilde{X}=x_0|U=u_i, T_{i,j}=0)
\neq p(\widetilde{X}=x_0|U=u_i, T_{i,j}=1),$$
$$p(\widetilde{X}=x_1|U=u_i, T_{i,j}=0)\neq p(\widetilde{X}=x_1|U=u_i, T_{i,j}=1).$$ But for any $x\notin\{x_0,x_1\}$,
\begin{eqnarray}&p(\widetilde{X}=x|U=u_i, T_{i,j}=0)=\label{eqn:A3}\\& p(\widetilde{X}=x|U=u_i, T_{i,j}=1).\nonumber\end{eqnarray}
Thus, we must have
\begin{eqnarray*}&p(\widetilde{Y}=y|\widetilde{X}=x_0)p(\widetilde{X}=x_0|U=u_i, T_{i,j}=0)+\\&
p(\widetilde{Y}=y|\widetilde{X}=x_1)p(\widetilde{X}=x_1|U=u_i, T_{i,j}=0)=\\&
p(\widetilde{Y}=y|\widetilde{X}=x_0)p(\widetilde{X}=x_0|U=u_i, T_{i,j}=1)+\\&
p(\widetilde{Y}=y|\widetilde{X}=x_1)p(\widetilde{X}=x_1|U=u_i, T_{i,j}=1).\end{eqnarray*}
This implies that \begin{eqnarray*}&\frac{p(\widetilde{X}=x_0|U=u_i, T_{i,j}=1)-p(\widetilde{X}=x_0|U=u_i, T_{i,j}=0)}{
p(\widetilde{X}=x_1|U=u_i, T_{i,j}=0)-p(\widetilde{X}=x_1|U=u_i, T_{i,j}=1)}=
\frac{p(\widetilde{Y}=y|\widetilde{X}=x_1)}{p(\widetilde{Y}=y|\widetilde{X}=x_0)}.\end{eqnarray*}
Note that the nominator and denominator are positive by what was proved in Appendix \ref{Appndx:C01}.

On the other hand, we also have by equation (\ref{eqn:A3}):
\begin{eqnarray*}&p(\widetilde{X}=x_0|U=u_i, T_{i,j}=0)+\\&
p(\widetilde{X}=x_1|U=u_i, T_{i,j}=0)=\\&
p(\widetilde{X}=x_0|U=u_i, T_{i,j}=1)+\\&
p(\widetilde{X}=x_1|U=u_i, T_{i,j}=1).\end{eqnarray*}
This implies that \begin{eqnarray*}&\frac{p(\widetilde{X}=x_0|U=u_i, T_{i,j}=1)-p(\widetilde{X}=x_0|U=u_i, T_{i,j}=0)}{
p(\widetilde{X}=x_1|U=u_i, T_{i,j}=0)-p(\widetilde{X}=x_1|U=u_i, T_{i,j}=1)}=1.\end{eqnarray*}
Hence, \begin{eqnarray*}&\frac{p(\widetilde{Y}=y|\widetilde{X}=x_1)}{p(\widetilde{Y}=y|\widetilde{X}=x_0)}=1.\end{eqnarray*} But we know that $p(\widetilde{Y}=y|\widetilde{X}=x_0)\neq p(\widetilde{Y}=y|\widetilde{X}=x_1)$ since the input values $x_0$ and $x_1$ are distinguishable by the $Y$ receiver. This is a contradiction.

\section{}\label{Appndx:D} The proof follows from the following two statements:

\emph{Statement 1: } Assume that $p^*(u,v,w,x)$ is an arbitrary joint distribution maximizing $\lambda I(W;Y)+ (1-\lambda)
I(W;Z)+I(U;Y|W)+I(V;Z|W)-I(U;V|W)$, and having the largest value of $I(W;Y)+I(W;Z)$ among all maximizing joint distributions. For every $w$, $p^*(x|w)$ must belong to the set
$\mathcal{T}(q(y,z|x))$ defined as follows. Let $\mathcal{T}(q(y,z|x))$ be the set of pmfs on $\mathcal{X}$, $t(x)$, such that
\begin{align*}
&\max_{p(u,v,w|x)t(x)q(y,z|x)}\big\{ \lambda I(W;Y)+ (1-\lambda)
I(W;Z)\\
&\quad + I(U;Y|W)+I(V;Z|W)-I(U;V|W)\big\}\\
&\quad = \max_{p(u,v|x)t(x)q(y,z|x)} (I(U;Y)+I(V;Z)-I(U;V)),
\end{align*}
and $I(W;Y)=I(W;Z)=0$ for
\emph{any}\footnote{Note that such a pmf may not unique.} pmf $p(u,v,w|x)t(x)$ that maximizes the expression
$\lambda I(W;Y)+ (1-\lambda)
I(W;Z)+I(U;Y|W)+I(V;Z|W)-I(U;V|W)$.\footnote{We
have used maximum and not supremum in the above conditions since
cardinality bounds on the auxiliary random variables exist
\cite{EvaluationMarton}.}

\emph{Statement 2: } Let $q(y,z|x)$ be a general broadcast channel, and $t(x)\in \mathcal{T}(q(y,z|x))$. Consider the maximization problem: $\max_{p(u,v|x)t(x)q(y,z|x)} (I(U;Y)+I(V;Z)-I(U;V))$. Assume that \emph{a} maximum occurs at $p^*(u,v|x)$. Then the following holds for random variables $(U, V, X, Y, Z) \sim p^*(u,v|x)t(x)q(y,z|x)$:
\begin{itemize}
  \item $I(\overline{U};Y)\geq I(\overline{U};VZ)$ for every $\overline{U}\rightarrow U \rightarrow VXYZ$.
  \item $I(\overline{V};Z)\geq I(\overline{V};UY)$ for every $\overline{V}\rightarrow V \rightarrow UXYZ$.
\end{itemize}

\subsection{Proof of Statement 1: }
Assume that the
marginal pmf of $X$ given $W=w$ does not belong to
$\mathcal{T}$ for some $w$. By the definition then,
at least one of the following must hold:

\emph{Case 1:} Corresponding to $p^*_{X|W=w}(x)$ is the conditional
distribution $p(\widehat{u},\widehat{v},\widehat{w}|\widehat{x})$
such that
\begin{align}&I(U;Y|W=w)+I(V;Z|W=w)-I(U;V|W=w)\nonumber<\\&\lambda I(\widehat{W};\widehat{Y})+ (1-\lambda)
I(\widehat{W};\widehat{Z}) +I(\widehat{U};\widehat{Y}|\widehat{W})\nonumber\\&\qquad +I(\widehat{V};\widehat{Z}|\widehat{W})-
I(\widehat{U};\widehat{V}|\widehat{W})\label{eqn:E3}\end{align}
where $p(\widehat{u},\widehat{v}, \widehat{w}, \widehat{x},\widehat{y}, \widehat{z})=p(\widehat{u},\widehat{v},\widehat{w}|\widehat{x})p^*_{X|W=w}(\widehat{x})q(\widehat{y},\widehat{z}|\widehat{x})$.

\emph{Case 2:} Corresponding to $p^*_{X|W=w}(x)$ is the conditional
distribution $p(\widehat{u},\widehat{v},\widehat{w}|\widehat{x})$
such that
\begin{align*}&I(U;Y|W=w)+I(V;Z|W=w)-I(U;V|W=w)\nonumber=\\&\lambda I(\widehat{W};\widehat{Y})+ (1-\lambda)
I(\widehat{W};\widehat{Z})+I(\widehat{U};\widehat{Y}|\widehat{W})\nonumber\\&\qquad +I(\widehat{V};\widehat{Z}|\widehat{W})-
I(\widehat{U};\widehat{V}|\widehat{W})\end{align*}
but $I(\widehat{W};\widehat{Y})+I(\widehat{W};\widehat{Z})>0$, where $p(\widehat{u},\widehat{v}, \widehat{w}, \widehat{x},\widehat{y}, \widehat{z})=p(\widehat{u},\widehat{v},\widehat{w}|\widehat{x})p^*_{X|W=w}(\widehat{x})q(\widehat{y},\widehat{z}|\widehat{x})$.

Define $\widetilde{U}$, $\widetilde{V}$, $\widetilde{W}$ jointly
distributed with $U$, $V$, $W$, $X$, $Y$, $Z$ as follows: whenever
$W\neq w$, the random variables $\widetilde{U}=U$, $\widetilde{V}=V$,
$\widetilde{W}=W$. For $W=w$, the
Markov chain $\widetilde{U}\widetilde{V}\widetilde{W}\rightarrow
X\rightarrow UVWYZ$ holds, and $p(\widetilde{u}, \widetilde{v},
\widetilde{w}|x)=p(\widehat{u},\widehat{v},\widehat{w}|\widehat{x})$. Next, assume
that $U'=\widetilde{U}$, $V'=\widetilde{V}$, $W'=W\widetilde{W}$.

If case 1 holds, we prove that $\lambda I(W';Y)+(1-\lambda) I(W';Z)+
I(U';Y|W')+I(V';Z|W')-I(U';V'|W')> \lambda I(W;Y)+(1-\lambda) I(W;Z)+ I(U;Y|W)+I(V;Z|W)-I(U;V|W),$ which results in a
contradiction. If case 2 holds, we prove that $\lambda I(W';Y)+(1-\lambda) I(W';Z)+ I(U';Y|W')+I(V';Z|W')-I(U';V'|W')=\lambda I(W;Y)+(1-\lambda) I(W;Z)+ I(U;Y|W)+I(V;Z|W)-I(U;V|W)$ but that
$I(W';Y)+I(W';Z)> I(W;Y)+I(W;Z)$, which results in a contradiction.

Assume that case 1 holds. Since $W'=W\widetilde{W}$, $I(W';Y)=I(W;Y)+I(\widetilde{W};Y|W)$ and $I(W';Z)=I(W;Z)+I(\widetilde{W};Z|W)$, we need to show that
\begin{eqnarray*}&\lambda I(\widetilde{W};Y|W) + (1-\lambda) I(\widetilde{W};Z|W)+ I(\widetilde{U};Y|W\widetilde{W})+\\&I(\widetilde{V};Z|W\widetilde{W})-I(\widetilde{U};\widetilde{V}|W\widetilde{W})>\\&
I(U;Y|W)+I(V;Z|W)-I(U;V|W)
\end{eqnarray*}
Remember that whenever
$W\neq w$, random variables $\widetilde{U}$, $\widetilde{V}$,
$\widetilde{W}$ were defined to be equal to $U$, $V$, $W$. Therefore we need to show that
\begin{eqnarray*}&p(W=w)\big[\lambda I(\widetilde{W};Y|W=w) + (1-\lambda) I(\widetilde{W};Z|W=w)+\\& I(\widetilde{U};Y|W=w,\widetilde{W})+I(\widetilde{V};Z|W=w,\widetilde{W})\\&-I(\widetilde{U};\widetilde{V}|W=w,\widetilde{W})\big]>\\&
p(W=w)\big[I(U;Y|W=w)+I(V;Z|W=w)\\&-I(U;V|W=w)\big].
\end{eqnarray*}
On the event $W=w$, random variables $\widetilde{U}$, $\widetilde{V}$,
$\widetilde{W}$ were defined so that $p(\widetilde{u}, \widetilde{v}, \widetilde{w}|x)$ is equal to
$p(\widehat{u},\widehat{v},\widehat{w}|\widehat{x})$. Furthermore the marginal distribution of $p(\widehat{x})$ is $p^*(x|W=w)$. Therefore $I(\widetilde{W};Y|W=w)=I(\widehat{W};\widehat{Y})$, $I(\widetilde{W};Z|W=w)=I(\widehat{W};\widehat{Z})$, $I(\widetilde{U};Y|W=w,\widetilde{W})=I(\widehat{U};\widehat{Y}|\widehat{W})$, etc. Thus it remains to show that
\begin{eqnarray*}&\lambda I(\widehat{W};\widehat{Y})+(1-\lambda) I(\widehat{W};\widehat{Z})+I(\widehat{U};\widehat{Y}|\widehat{W})\\&+I(\widehat{V};\widehat{Z}|\widehat{W})-I(\widehat{U};\widehat{V}|\widehat{W})>\\&
I(U;Y|W=w)+I(V;Z|W=w)-I(U;V|W=w).\end{eqnarray*}
This holds because of equation (\ref{eqn:E3}). This concludes the proof for case 1.

Now, assume that case 2 holds. Following, the above proof for case 1, one can get \begin{eqnarray*}&\lambda I(W';Y)+(1-\lambda) I(W';Z)+
I(U';Y|W')\nonumber\\&+I(V';Z|W')-I(U';V'|W')\geq \\&\lambda I(W;Y)+(1-\lambda)I(W;Z)+
I(U;Y|W)\nonumber\\&+I(V;Z|W)-I(U;V|W).\end{eqnarray*}
Note that $I(W';Y)+I(W';Z)=I(W;Y)+I(\widetilde{W};Y|W)+I(W;Z)+I(\widetilde{W};Z|W)$. Thus, we need to show that $I(\widetilde{W};Y|W)+I(\widetilde{W};Z|W)>0$. Note that
\begin{eqnarray*}&I(\widetilde{W};Y|W)+I(\widetilde{W};Z|W)=\\&p(W=w)\big(I(\widetilde{W};Y|W=w)+I(\widetilde{W};Z|W=w)\big)\\&
=p(W=w)\big(I(\widehat{W};\widehat{Y})+I(\widehat{W};\widehat{Z})\big)>0.\end{eqnarray*}

\subsection{Proof of Statement 2: } Take an arbitrary $\overline{U}$ satisfying $\overline{U}\rightarrow U \rightarrow VXYZ$. Let $\widehat{W}=\overline{U}$, $\widehat{U}=U$, $\widehat{V}=V$. Since $t(x)\in \mathcal{T}(q(y,z|x))$, and $p^*(u,v|x)$ maximizes $I(U;Y)+I(V;Z)-I(U;V)$, we can write:
\begin{align}&I(U;Y)+I(V;Z)-I(U;V)\geq\nonumber\\& \lambda I(\widehat{W};Y)+ (1-\lambda) I(\widehat{W};Z)+I(\widehat{U};Y|\widehat{W})+I(\widehat{V};Z|\widehat{W})\nonumber\\&\qquad -I(\widehat{U};\widehat{V}|\widehat{W}),\label{eqn:lemma2a}\end{align} and furthermore if equality holds, we must have $I(\widehat{W};Y)=I(\widehat{W};Z)=0$. We prove that this implies that $I(\overline{U};Y)\geq I(\overline{U};VZ)$.

We can write:
\begin{eqnarray*}&I(U;Y)+I(V;Z)-I(U;V)\geq\\& \lambda I(\widehat{W};Y)+ (1-\lambda) I(\widehat{W};Z)+I(\widehat{U};Y|\widehat{W})+I(\widehat{V};Z|\widehat{W})\nonumber\\&-I(\widehat{U};\widehat{V}|\widehat{W})=\\&
\lambda I(\overline{U};Y)+(1-\lambda) I(\overline{U};Z)+I(U;Y|\overline{U})\nonumber\\&+I(V;Z|\overline{U})-I(U;V|\overline{U}).\end{eqnarray*}
Since $\overline{U}\rightarrow U \rightarrow VXYZ$, we have $I(U;Y)=I(\overline{U}U;Y)$ and $I(U;V)=I(\overline{U}U;V)$. This implies that
\begin{eqnarray*}&I(\overline{U};Y)+I(V;Z)-I(\overline{U};V)\geq\nonumber\\&
\lambda I(\overline{U};Y)+(1-\lambda) I(\overline{U};Z)+I(V;Z|\overline{U})\end{eqnarray*}
or,
\begin{eqnarray*}&I(\overline{U};Y)+I(V;Z)\geq
\lambda I(\overline{U};Y)+(1-\lambda) I(\overline{U};Z)+I(V;Z\overline{U})\end{eqnarray*}
or,
\begin{eqnarray*}&(1-\lambda)I(\overline{U};Y)\geq
(1-\lambda) I(\overline{U};Z)+I(V;\overline{U}|Z).\end{eqnarray*}
In other words
\begin{eqnarray}&(1-\lambda)I(\overline{U};Y)\geq
(1-\lambda) I(\overline{U};VZ)+\lambda I(V;\overline{U}|Z).\label{eqn:1}\end{eqnarray}

Let us consider the following two cases:
\begin{itemize}
  \item $\lambda<1$: In this case, equation (\ref{eqn:1}) implies that $I(\overline{U};Y)\geq I(\overline{U};VZ) + \frac{\lambda}{1-\lambda}I(V;\overline{U}|Z)$. This inequality implies the desired inequality $I(\overline{U};Y)\geq I(\overline{U};VZ)$.
  \item $\lambda=1$: In this case, equation (\ref{eqn:1}) implies that $I(V;\overline{U}|Z)=0$. Furthermore equation (\ref{eqn:lemma2a}) will hold with equality. Since $t(x)\in \mathcal{T}$, we must have $I(\overline{U};Y)=I(\overline{U};Z)=0$.$\\$ The fact that $I(V;\overline{U}|Z)=I(\overline{U};Y)=I(\overline{U};Z)=0$ implies that
      $I(\overline{U};Y)=I(\overline{U};ZV)=0$. Therefore the inequality $I(\overline{U};Y)\geq I(\overline{U};ZV)$ also holds in this case.
\end{itemize}
In each case, we are done. The proof for the inequality $I(\overline{V};Z)\geq I(\overline{V};YU)$ is similar.

\end{document}